\newtheorem{theorem}{Theorem}[section]
\newtheorem{lemma}[theorem]{Lemma}
\newenvironment{definition}{$\;$\newline \noindent {\bf
    Definition}$\;$}{$\;$\newline}
\def\boxit#1{\vbox{\hrule\hbox{\vrule\kern4pt
  \vbox{\kern1pt#1\kern1pt}
\kern2pt\vrule}\hrule}}
\begin{document}

\title{ \bf An $O^*(1.84^k)$ Parameterized Algorithm for \\ the Multiterminal Cut Problem}

\author{
  Yixin Cao\thanks{{Institute for Computer Science and Control},
    Hungarian Academy of Sciences, Hungary, {\tt yixin@sztaki.hu}.
    Supported by the European Research Council (ERC) under the 
    grant 280152 and the Hungarian Scientific Research Fund 
    (OTKA) under the grant NK105645.}
  \and 
  Jianer Chen\thanks{Department of Computer Science and Engineering,
    Texas A\&M University, U.S.A., and School of Information Science \&
    Engineering, Central South University, P.R. China, {\tt
      chen@cse.tamu.edu}.  Supported in part by US NSF under the
    grants CCF-0830455 and CCF-0917288.}
  \and
  Jia-Hao Fan\thanks{Department of Computer Science and Engineering,
    Texas A\&M University, U.S.A., {\tt grantfan@cse.tamu.edu}.}
}

\date{}
\maketitle

\begin{abstract}
  We study the \emph{multiterminal cut} problem, which, given an
  $n$-vertex graph whose edges are integer-weighted and a set of
  terminals, asks for a partition of the vertex set such that each
  terminal is in a distinct part, and the total weight of crossing
  edges is at most $k$.  Our weapons shall be two classical results
  known for decades: \emph{maximum volume minimum ($s,t$)-cuts} by
  [Ford and Fulkerson, \emph{Flows in Networks}, 1962] and
  \emph{isolating cuts} by [Dahlhaus et al., \emph{SIAM J. Comp.}
  23(4):864-894, 1994].  We sharpen these old weapons with the help of
  submodular functions, and apply them to this problem, which enable
  us to design a more elaborated branching scheme on deciding whether
  a non-terminal vertex is with a terminal or not.  This bounded
  search tree algorithm can be shown to run in $1.84^k\cdot n^{O(1)}$
  time, thereby breaking the $2^k\cdot n^{O(1)}$ barrier.  As a
  by-product, it gives a $1.36^k\cdot n^{O(1)}$ time algorithm for
  $3$-terminal cut.  The preprocessing applied on non-terminal
  vertices might be of use for study of this problem from other
  aspects.
\end{abstract}

\section{Introduction} 

One central and universal problem in combinatorial optimization and
algorithmic graph theory is to find a partition of the vertex set
satisfying some properties.  The most basic formulation is to find a
$2$-partition that separates a source vertex $s$ from a target vertex
$t$, while the crossing edges have minimum cardinality or weight sum,
called the \emph{size} \cite{ford-62-flow}.  Such a partition, called
a minimum ($s,t$)-cut, can be efficiently computed.  This formulation
smoothly extends to the separation of two disjoint subsets of
vertices, both conceptually and computationally.  However, when we
have three or more (sets of) vertices to separate in a pairwise way
with the same minimization objective, it becomes computationally
intractable. 

Dahlhaus et al.~\cite{dahlhaus-94-complexity-of-multiway-cuts} and
Cunningham \cite{cunningham-91-multiway-cut} formulated the
\textsc{multiterminal cut} problem, or \textsc{$p$-terminal cut} when
one wants to emphasize the number $p$ of terminals, and initiated its
study.  The classical but still must-read paper
\cite{dahlhaus-94-complexity-of-multiway-cuts} contains several
results.  First and foremost, it proves its NP-hardness, in general as
well as under special restraints, e.g., $p$ being a constant as small
as $3$, or the input graph being planar.  Another result, also
negative, is that the problem, in its general form or restricted to
constant number of terminals, is MAX SNP-hard.  The most cited
positive result of \cite{dahlhaus-94-complexity-of-multiway-cuts} is
an elegant approximation algorithm with ratio $2(1-1/p)$.  The main
observation behind this algorithm is that an objective $p$-partition
contains for any terminal a $2$-partition that separates it from other
terminals; called an \emph{isolating cut} for this terminal.  On the
one hand, a minimum isolating cut for each terminal can be easily
computed and its size is a lower bound for that of any isolating cut
for this terminal; on the other hand, each crossing edge in the
objective $p$-partition is incident to precisely two parts, and thus
counted twice in the size of all isolating cuts.  Therefore, the size
of any objective cut cannot be smaller than half of the total size of
minimum isolating cuts for all terminals.  Interestingly, the
approximation algorithm from this observation coincides with the
linear program (LP) relaxation from the dual of an integer program for
the multicommodity flow \cite{garg-04-weighted-multiway-cut}, which is
known to have half integrality property.
After a sequence of improvements and with great efforts, Buchbinder et
al.~\cite{buchbinder-13-exponential-clocks-and-mwc} managed to improve
the approximation ratio to $1.323$, which breaks another barrier to
this problem, i.e., $4/3$ for the approximation ratio.

Recently, \textsc{multiterminal cut} and its variations receive a lot
of interest from the perspective of parameterized complexity.  Recall
that a problem, parameterized by $k$, is {\em fixed-parameter
  tractable (FPT)} if there is an algorithm with runtime $f(k)\cdot
n^{O(1)}$, where $f$ is a computable function depending only on $k$.
Using the new concept \emph{important separator}, Marx
\cite{marx-06-parameterized-graph-separation} proved this problem,
parameterized by the cut size $k$, is FPT.  In sequel Chen et
al.~\cite{chen-09-multiway-cut} proposed the first single exponential
time algorithm, with runtime $4^k\cdot n^{O(1)}$.  It was further
improved to $2^k\cdot n^{O(1)}$ by Xiao
\cite{xiao-10-multiterminal-cut}.  Unlike its predecessors, which work
indifferently on \textsc{multiterminal cut} as well as its vertex
variation that is commonly known as \textsc{node multiway cut}, Xiao's
approach, however, applies exclusively to \textsc{multiterminal cut}.
A matching algorithm for \textsc{node multiway cut} was later reported
by Cygan et al.~\cite{cygan-11-multiway-cut}, who used a very
interesting branching scheme based on a novel usage of the LP for
multicommodity flow.  With the benefit of hindsight, we are able to
point out that techniques and results of Xiao
\cite{xiao-10-multiterminal-cut} and Cygan et
al.~\cite{cygan-11-multiway-cut} are essentially the same: the former
algorithm can also be re-interpreted with the same branching scheme
using the same flow-based LP.

The cut size $k$ is not the only parameter for \textsc{multiterminal
  cut}, and a natural alternative is $p$, the number of terminals.
However, as aforementioned, in general graphs it is already NP-hard
when $p=3$, hence very unlikely to be FPT.  On the other hand, when
$p$ is a constant, it can be solved in polynomial time on planar
graphs \cite{dahlhaus-94-complexity-of-multiway-cuts}.  Inspired by
this, it is natural to ask for its fixed-parameter tractability
parameterized by $p$ on planar graphs.  Marx
\cite{marx-12-planar-multiway-cut} recently gave a negative answer to
this question.  This apparently was a part of a systematic study of
this problem, of which other outcomes include a $2^{O(p)}\cdot
n^{O(\sqrt{p})}$ time algorithm \cite{klein-12-planar-multiway-cut},
improving previous one in time $n^{O(p)}$
\cite{dahlhaus-94-complexity-of-multiway-cuts}, and a polynomial time
approximation schema \cite{bateni-12-ptas-planar-multiway-cut}.

Naturally, the hardness of this problem should be attributed to the
interference of isolating cuts for different terminals, i.e., an
isolating cut for one terminal might help separate another pair of
terminals.  For example, consider the following unit-weighted graph: a
path of four vertices whose ends are terminals $t_1$ and $t_2$
respectively, and another terminal $t_3$ adjacent to both internal
vertices of the path.  Clearly, any isolating cut for $t_3$ has size
at least $2$, and there are two cuts with the minimum size.  Of which
one is a proper subset of the other, and the larger one with three
vertices would be more desirable.  This example suggests \emph{a
  containment relation of minimum isolating cuts for a terminal} and
\emph{the preference for the largest one among them}.  They had been
formalized in the classical work of Ford and Fulkerson
\cite{ford-62-flow} and Dahlhaus et al.
\cite{dahlhaus-94-complexity-of-multiway-cuts} (see
Theorems~\ref{thm:max-vol-min-cut} and \ref{thm:min-cut-safe} in
Section~\ref{sec:pre}).\footnote{Interestingly, these theorems seem to
  be widely and unfairly ignored, and hence are re-proved again and
  again even decades after their first appearance.  A plausible
  explanation might be that both pieces of work contain other more
  famous results that eclipse them.}

As a matter of fact, the {important separator} is an immediate
generalization of them.  With this technique, several cut related
problems were shown to be FPT.  Some of them, e.g., \textsc{directed
  feedback vertex set} \cite{chen-08-dfvs} and \textsc{multicut}
\cite{marx-14-multicut,thomasse-10-multicut}, had been open for quite
a long time.  Also related is the concept \emph{extreme set}, which
is very well-known in the study of cuts enumerations
\cite{nagamochi-10-algorithmic-connectivity}.

Here we manage to break the $2^k \cdot n^{O(1)}$ barrier, which
has withstood several rounds of attacks.  
\begin{theorem}\label{thm:main}
  The \textsc{multiterminal cut} problem can be solved in $1.84^k
  \cdot n^{O(1)}$ time.
\end{theorem}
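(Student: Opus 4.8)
The plan is to build a bounded-search-tree algorithm, and to measure progress not by the budget $k$ alone but by $\mu = k - \lambda$, where $\lambda$ is a lower bound on the size of any solution that the algorithm maintains along the recursion. The natural choice, following the isolating-cut observation of Dahlhaus et al., is $\lambda = \tfrac12\sum_i c(t_i)$, with $c(t_i)$ the size of a minimum isolating cut for terminal $t_i$; this never exceeds the size of a multiterminal cut. What we really track, though, is the \emph{maximum-volume} minimum isolating cut of each terminal (Theorem~\ref{thm:max-vol-min-cut}): its uniqueness and nesting, refined through submodularity, are exactly what lets us predict how $\lambda$ moves when the branching operations are applied. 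If ever $\lambda>k$ we answer \textsc{no}.

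First I would do the preprocessing. Using Theorem~\ref{thm:min-cut-safe} we repeatedly contract, for each terminal $t_i$, its largest minimum isolating cut; this is safe, and afterwards every terminal's minimum isolating cut is the trivial one, consisting of the terminal's own incident edges. We also add a handful of polynomial-time reduction rules on non-terminal vertices: discard a non-terminal separated from all terminals, suppress degree-one and low-degree non-terminals, bound edge multiplicities, and — by a one-line exchange argument — contract a non-terminal $v$ into a terminal component $T$ whenever at least half of $v$'s incident edge weight goes to $T$ (moving $v$ into $T$'s part can then only help). Since a connected instance with $\lambda\le k$ has $O(k)$ terminals, all of this, together with the flow computations, runs in $n^{O(1)}$ time and does not increase $\mu$.

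The branching step takes a non-terminal $v$ adjacent, in the reduced graph, to the component of some terminal $t$, and branches on whether $v$ lies in the part of $t$ or not. In the ``yes'' branch we contract $v$ into $t$. By the previous reduction rule, $v$ has strictly more edge weight to the outside than to $t$'s component, so the new \emph{trivial} isolating cut of the enlarged terminal is strictly larger; the role of the sharpened Ford--Fulkerson theorem and its submodular strengthening is to guarantee that the new \emph{minimum} isolating cut cannot fall back below the old value, so $\lambda$ genuinely increases and $\mu$ drops. In the ``no'' branch all edges between $v$ and the component become crossing edges, so $k$ decreases by at least $1$ while one checks $\lambda$ decreases by strictly less, again dropping $\mu$. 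A case analysis on the local picture around $v$ — how many terminal components it touches, the relevant multiplicities, and the resulting behaviour of $\lambda$ — produces a finite list of branching vectors in $\mu$, whose worst characteristic root is below $1.84$; since $\mu\le k$ at the root, the search tree has $O(1.84^k)$ leaves and the algorithm runs in $1.84^k\cdot n^{O(1)}$ time.

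The hard part is precisely the ``$v$ goes with $t$'' branch: contraction removes no edge and cuts nothing, so without the $\lambda$ term in the measure it would only give the $(1,1)$ vector and the $2^k$ bound. Making that branch count means proving that absorbing a genuine boundary vertex into a terminal always tightens the isolating-cut lower bound — equivalently, that after the preprocessing the maximum-volume minimum isolating cut cannot stay put when a new adjacent non-terminal is committed to it — and this is where the classical max-volume minimum cut together with its submodular sharpening does the real work; getting a clean, uniformly applicable statement here is the main obstacle, and the reduction rules above are chosen exactly to make it hold. Finally, for the $3$-terminal case the recursion is shallow: once the part of one terminal is fixed, separating the remaining two terminals is a single minimum $(s,t)$-cut computation, so the branching only has to resolve a bounded amount of structure, and the same refined analysis collapses to the $1.36^k\cdot n^{O(1)}$ bound stated in the abstract.
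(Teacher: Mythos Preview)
Your framework is right---the measure $\mu=k-\tfrac12\sum_i d(t_i)$ (equivalently the paper's $m=2k-h$), the preprocessing that contracts maximum-volume minimum isolating cuts so that each terminal is its own min-iso-cut, and the two-way branch on a boundary non-terminal $v$---is exactly Xiao's setup, and your analysis of the two branches is correct. But that analysis, as you present it, only yields the branching vector $(1,1)$: in the ``yes'' branch $\mu$ drops by $ext_p(v)\ge 1$ (in the $m$-scale), and in the ``no'' branch by at least $1$ via Lemma~\ref{lem:increment-isolating-cuts}. That is the $2^k$ algorithm. The sentence ``a case analysis on the local picture around $v$ \dots produces a finite list of branching vectors whose worst characteristic root is below $1.84$'' is precisely the content of the theorem, and you have not supplied it.

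The obstruction is concrete: the bad case is $ext_p(v)=1$, where both branches drop $m$ by exactly $1$. The paper's new ingredients, which your plan does not contain, are (i) the notion of the \emph{extension} $X_p(v)$ and Lemma~\ref{lem:min+1}, which lets one merge the entire extension of a distance-$1$ vertex into a single vertex so that every distance-$1$ vertex is literally a neighbour of $t_p$; and (ii) a refined branching on such $v$ that is not simply ``in/out'': when $v$ has a unique non-terminal neighbour $u$ one merges \emph{both} $u$ and $v$ into $t_p$ in the ``yes'' branch (gaining $2$), and when $v$ has unit degree $3$ with two non-terminal neighbours one replaces the ``yes'' branch by a further two-way split, yielding the three-way vector $(1,2,3)$. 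Lemma~\ref{lem:case1} is then needed to show that whenever case~2c gives only a drop of $1$, case~1 compensates with a drop of $2$. None of your reduction rules (half-weight contraction, degree-$\le2$ suppression) forces this structure; in particular they do not eliminate the $ext_p(v)=1$ case, and a plain two-way branch on it cannot beat $(1,1)$. Your closing remark on $p=3$ is also off: the improvement to $1.36^k$ comes not from the recursion becoming shallow, but from tightening the exit condition to $k\ge(1-1/p)h$, so that the initial measure satisfies $m<\tfrac{p-2}{p-1}k$ and the same $(1,2,3)$ vector is applied over a shorter range.
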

For a fixed number of terminals, we can even do better, especially
when $p$ is a small constant.
\begin{theorem}\label{thm:p-terminal}
  The \textsc{$3$-terminal cut} problem can be solved in $1.36^k \cdot
  n^{O(1)}$ time.  In general, \textsc{$p$-terminal cut} can be solved
  in $1.84^{(p-2)k/(p-1 )} \cdot n^{O(1)}$ time.
\end{theorem}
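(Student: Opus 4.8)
The plan is to obtain Theorem~\ref{thm:p-terminal} as a refinement of the algorithm behind Theorem~\ref{thm:main}, anchored by the trivial base case $p=2$: a $2$-terminal cut is just a minimum $(s,t)$-cut and is computable in polynomial time, which makes an induction on $p$ natural. Fix an optimal partition $(V_1,\dots,V_p)$ of weight at most $k$. Since every crossing edge lies on the boundary of exactly two parts, $\sum_{i=1}^p w(\partial V_i)=2k$, so after renaming we may assume $w(\partial V_p)\ge 2k/p$; branching over which part is the heaviest costs only a polynomial factor. The first step is therefore to identify $\partial V_p$, i.e.\ to enumerate the relevant candidates for a $(t_p,\{t_1,\dots,t_{p-1}\})$-isolating cut. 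This is precisely where the two classical theorems enter: the maximum-volume minimum $(s,t)$-cut (Theorem~\ref{thm:max-vol-min-cut}) and the safeness of the largest minimum isolating cut (Theorem~\ref{thm:min-cut-safe}) let us contract the ``forced'' part of $V_p$ onto $t_p$ at no cost, so that only a controlled set of candidates for $\partial V_p$ remains, which we enumerate by running the branching scheme of Theorem~\ref{thm:main} with the $p-1$ other terminals merged into one sink.

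Once $\partial V_p$, and hence $V_p$, is fixed we delete $V_p$; the induced instance on $V\setminus V_p$ is a $(p-1)$-terminal cut whose optimum equals $k-w(\partial V_p)$, because the crossing edges incident to $V_p$ are exactly the ones already paid for. We solve it recursively. Writing $g(p)$ for the base of the exponential factor of the running time of the $p$-terminal algorithm, the analysis reduces, up to a polynomial factor, to $g(2)=1$ together with
\[
   g(p)^k \;=\; \max_{\,2k/p\,\le\, d\,\le\, k}\;\Phi_p(d)\cdot g(p-1)^{\,k-d},
\]
where $\Phi_p(d)$ is the cost of enumerating $\partial V_p$ when it has weight $d$. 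Re-running the branching analysis under the extra knowledge that only $p-1$ terminals sit on the sink side shows $\Phi_p(d)\le 1.84^{(p-2)d/(p-1)}$; since $\tfrac{p-2}{p-1}>\tfrac{p-3}{p-2}=\log_{1.84}g(p-1)$, the coefficient of $d$ in the exponent is positive, the maximum is attained at $d=k$, and we get $g(p)=1.84^{(p-2)/(p-1)}$. For $p=3$ this is $\sqrt{1.84}<1.36$, giving the $1.36^k\cdot n^{O(1)}$ bound, and in general the running time is $1.84^{(p-2)k/(p-1)}\cdot n^{O(1)}$.

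I expect the main obstacle to be exactly the bound $\Phi_p(d)\le 1.84^{(p-2)d/(p-1)}$. Invoking Theorem~\ref{thm:main} as a black box only yields $\Phi_p(d)\le 1.84^{d}$, which collapses the recursion back to $1.84^k$; the improvement must come from opening up the branching rule --- ``decide whether a non-terminal vertex is with the active terminal or not'' --- and showing that, with only $p-1$ terminals available on the far side, the branching vectors tighten by precisely the factor $\tfrac{p-2}{p-1}$ in the exponent. Making this re-analysis uniform in $p$, and ensuring that the budget $k$ is split \emph{additively} between the enumeration step and the recursive $(p-1)$-terminal call rather than multiplied, is the delicate point; the rest is bookkeeping on top of the machinery already developed for Theorem~\ref{thm:main}.
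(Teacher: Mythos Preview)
Your plan hinges entirely on the estimate $\Phi_p(d)\le 1.84^{(p-2)d/(p-1)}$, which you yourself flag as the ``main obstacle'' and never prove. The mechanism you suggest for obtaining it---merge $t_1,\dots,t_{p-1}$ into a single sink and rerun the branching of Theorem~\ref{thm:main}---cannot work: once merged, you are in a $2$-terminal instance, and the algorithm exits at step~0.0 with \emph{one} minimum $(t_p,\text{sink})$-cut rather than an enumeration of candidate $V_p$'s. If instead you keep the terminals separate and run the actual algorithm until $t_p$ becomes isolated, the measure drop during that phase is governed by changes to \emph{all} the $d(t_i)$, not just by $d=w(\partial V_p)$, and there is no evident reason it should be bounded by $\tfrac{p-2}{p-1}d$. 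Your inductive recursion therefore rests on an inequality you have not established and for which no route is indicated.

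The paper's argument avoids this decomposition entirely. It runs the \emph{same} algorithm for every $p$ and tightens only the exit test in step~0.3: replace $k\ge h$ by $k\ge (1-1/p)h$. This is sound because, given the current min-iso-cuts, discarding the largest and keeping the other $p-1$ already yields a multiterminal cut of size at most $(1-1/p)h$. Consequently branching is ever entered only when $h>\tfrac{p}{p-1}k$, so the initial measure satisfies
\[
m \;=\; 2k-h \;<\; 2k-\tfrac{p}{p-1}k \;=\; \tfrac{p-2}{p-1}\,k,
\]
and with the same worst branching vector $(1,2,3)$ the number of leaves is at most $1.84^{(p-2)k/(p-1)}$; for $p=3$ this is $1.84^{k/2}<1.36^k$. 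No induction on $p$, no enumeration of $V_p$, and no per-phase budget splitting are needed---the factor $\tfrac{p-2}{p-1}$ appears in one line from the sharpened exit condition, not from a re-analysis of the branching.
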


The main thrust of our algorithm is a careful analysis of vertices
\emph{close} to a terminal in a sense.  Recall that all previous
parameterized algorithms have the same branching scheme, which
considers each undecided vertex and branches on where to place it.
This operation either increases the size of the minimum isolating cut
for a terminal, or identifies a crossing edge; both can be bounded by
functions of $k$.  We show that a set of vertices whose inclusion to a
terminal increases the minimum isolating cut for it by one can be
grouped together, and then disposed of by a more careful branching
rule.  This rule, together with a measure based analysis
\cite{fomin-11-exact-algorithms}, delivers the claimed algorithm.  We would
like to point out its resemblance with \emph{extreme sets}
\cite{nagamochi-10-algorithmic-connectivity}, which might be used in
the preprocessing of non-terminal vertices, and hence shed some light
on the kernelization of \textsc{multiterminal cut}.  Our approach,
however, does not seem to be generalizable to \textsc{node multiway
  cut} in an easy way.  So we leave it open for a parameterized
algorithm for \textsc{node multiway cut} of time $c^k\cdot n^{O(1)}$
for some constant $c < 2$.

A very interesting generalization of the \textsc{multiterminal cut}
problem was recently proposed by Chekuri and Ene
\cite{chekuri-11-submodular-multiway-partition}, who studied the
\textsc{submodular multiway partition} problem, where the cut size is
replaced by general submodular set functions.  The
\textsc{$p$-terminal cut} problem is also related to the
\textsc{$p$-way cut} problem, which asks for a $p$ partition with the
minimum number/weight sum of crossing edges.  Also NP-hard in general,
it can be solved in polynomial time when $p$ is fixed
\cite{goldschmidt-94-k-cut}, and Kawarabayashi and Thorup
\cite{thorup-11-k-cut} recently proved that it is FPT parameterized by
cut size $k$.

\section{Preliminaries}\label{sec:pre}

For notational convenience, we omit braces for singleton sets, i.e.,
we use $t$ to denote both the element itself and the singleton set
that contains only this element.  As the precise meaning is always
clear from context, this abuse will not introduce ambiguities.  We use
$+$ and $-$ for operations set union and set difference,
respectively.  

The graph, given by $(V,E)$, is simple and undirected.  The
cardinality of $V$ is denoted by $n$ throughout the paper.  Each edge
$e\in E$ is weighted by a positive integer $w(e)$; the weight of a
subset $E'$ of $E$ is defined as $w(E') = \sum_{e\in E'} w(e)$.  For
two (possibly intersecting) subsets $X,Y \subset V$, we denote by
$E(X,Y)$ the set of edges with one end in $X$ and the other in $Y$,
and use $w(X,Y)$ as a short hand for $w(E(X,Y))$.  For
a 
subset $X$ of $V$,
the ordered partition $(X, V - {X})$ is called a \emph{cut} of $G$,
and denoted by $X$, whose \emph{size} is defined by $ d(X) = w(X, V -
{X})$.  By definition, $ d(\emptyset) = d(V) = 0$.  A cut $X$ is an
\emph{($S,T$)-cut} if $S\subseteq X \subseteq V - {T}$.  An
{($S,T$)-cut} with the minimum size is called a \emph{minimum
  ($S,T$)-cut}.

The following fact was first observed by Ford and Fulkerson
\cite[Section~1.5]{ford-62-flow}, as a by-product of the max-flow
min-cut theorem, and later rediscovered several times by different
authors.

\begin{theorem}[\cite{ford-62-flow}]\label{thm:max-vol-min-cut}
  Let $S, T \subset V$ be two disjoint nonempty subsets of vertices.
  There is a minimum ($S, T$)-cut $X$ such that all other minimum ($S,
  T$)-cuts are subsets of $X$.  Moreover, this cut can be constructed
  in polynomial time.
\end{theorem}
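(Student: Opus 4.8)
The plan is to exploit the submodularity of the cut function $d$. Recall that for any two subsets $A, B \subseteq V$ one has the standard inequality $d(A) + d(B) \ge d(A\cap B) + d(A\cup B)$; this follows by a direct edge-counting argument, checking the contribution of each edge according to which of the four "quadrants" determined by $A$ and $B$ its endpoints fall into. I would state and prove this submodularity first, since everything else is a short consequence of it. Let $\mu$ denote the common size of a minimum $(S,T)$-cut, and observe that the collection of minimum $(S,T)$-cuts is nonempty (the cut $S$ itself is an $(S,T)$-cut, so the minimum is well defined, and it is attained since there are finitely many cuts).

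The key step is to show that the minimum $(S,T)$-cuts are closed under union. Suppose $X$ and $Y$ are both minimum $(S,T)$-cuts, so $d(X) = d(Y) = \mu$. Since $S \subseteq X$ and $S\subseteq Y$ we have $S \subseteq X\cap Y$, and since $X, Y \subseteq V - T$ we have $X\cup Y \subseteq V - T$; hence both $X\cap Y$ and $X\cup Y$ are $(S,T)$-cuts, so $d(X\cap Y) \ge \mu$ and $d(X\cup Y) \ge \mu$. Submodularity gives $2\mu = d(X) + d(Y) \ge d(X\cap Y) + d(X\cup Y) \ge 2\mu$, forcing equality throughout; in particular $d(X\cup Y) = \mu$, so $X\cup Y$ is again a minimum $(S,T)$-cut (and incidentally so is $X\cap Y$). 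Iterating this over all minimum $(S,T)$-cuts — of which there are finitely many — the union of all of them is itself a minimum $(S,T)$-cut $X^\ast$, and by construction every minimum $(S,T)$-cut is contained in $X^\ast$. This establishes the existence claim.

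For the algorithmic claim, I would compute a minimum $(S,T)$-cut by a maximum-flow computation (contracting $S$ to a single source and $T$ to a single sink), which runs in polynomial time; the standard max-flow/min-cut machinery yields in polynomial time the \emph{source side} of a minimum cut, and by taking the set of vertices reachable from the source in the residual graph one obtains precisely the minimal such cut, while taking the complement of the set that can reach the sink yields the maximal one $X^\ast$. Alternatively, and more in the spirit of the paper, one can argue directly: compute any minimum $(S,T)$-cut, then for each vertex $v\notin X\cup T$ test whether $d$ of a minimum $(S+v, T)$-cut still equals $\mu$, and if so move to that larger cut; since each successful test strictly enlarges the cut, after at most $n$ rounds one reaches $X^\ast$, and the whole procedure is polynomial. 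The main obstacle is purely expository rather than mathematical: one must be careful that the residual-graph construction genuinely returns the \emph{maximum} minimum cut and not merely \emph{a} minimum cut, so I would spell out that step — identifying $V - X^\ast$ with the set of vertices from which $T$ is reachable in the residual network — rather than leaving it to the reader.
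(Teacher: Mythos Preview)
The paper does not supply its own proof of Theorem~\ref{thm:max-vol-min-cut}; it is quoted as a classical fact from Ford and Fulkerson and used as a black box throughout. Your argument is correct and is the standard one: the submodularity inequality you invoke is exactly Theorem~\ref{thm:submodular} of the paper, and from it the closure of minimum $(S,T)$-cuts under union follows at once, so the union of all (finitely many) minimum $(S,T)$-cuts is itself a minimum cut and contains every other. The residual-graph description of $X^\ast$ as the complement of the set of vertices that can reach $T$ in the residual network of a maximum flow is likewise the classical construction, and your alternative incremental procedure also works. There is nothing to compare against in the paper, but nothing to fix in your proof either.
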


Such an ($S, T$)-cut will be called the \emph{maximum volume minimum
  ($S,T$)-cut}, or \emph{max-vol min-cut for ($S,T$)} for short.  Note
that the definition is asymmetric and the pair ($S, T$) is ordered;
the sets $S$ and $T$ are commonly referred to as the source and target
terminals, respectively.  Given any pair of terminals, their max-vol
min-cut can be found in polynomial time.  Indeed, most known
algorithms for minimum cut return the max-vol min-cut for either
($S,T$) or ($T,S$).  See also
\cite{nagamochi-10-algorithmic-connectivity} for an updated and more
comprehensive treatment.

Generalized in a natural way, given a set ${ T} = \{t_1,\dots,t_p\}$
of $p$ terminals, where $p\ge 3$, a partition $\{V_1,\dots,V_p\}$ of
$V$ is called a \emph{multiterminal cut} for ${T}$ if $t_i\in V_i$ for
each $1\le i\le p$.\footnote{We define multiterminal cut as a
  partition, instead of a disconnecting set, because it is more
  compatible with the original definition of cuts \cite{ford-62-flow},
  (see Sec.~1.6 of \cite{ford-62-flow} for an explanation on why Ford
  and Fulkerson chose partitions over disconnecting sets in the first
  place,) as well as its extension to submodular multiway partitions
  \cite{chekuri-11-submodular-multiway-partition}.}  The size of this
multiterminal cut is defined to be $\sum_{u\in V_i,v\in V_j, 1\le i<
  j\le p}w(u,v)$, that is, the weight sum of \emph{crossing edges},
edges with ends in different parts.  Since each edge is counted in
precisely two parts, the size is equivalent to
$\frac{1}{2}\sum^p_{i=1} d(V_i)$.  We are now ready for a formal
definition of the problem.

\medskip

\fbox{\parbox{0.9\linewidth}{
  \textsc{multiterminal cut} ($G,w,T,k$)

\begin{tabularx}{\linewidth}{rX}
  \textit{Input:} & a graph $G = (V, E)$, a weight function $w : E \to
  \mathbb{N}$, a set $T$ of $p$ distinct terminals, and a nonnegative
  integer $k$.
  \\
  \textit{Task:} & Either find a multiterminal cut for $T$ of size no
  more than $k$, or report no such a multiterminal cut exists.
\end{tabularx}
}}
\medskip

Needless to say, for each $1\le i\le p$, set $V_i$ makes a ($t_i, T -
t_{i}$)-cut.  In other words, $V_i$ isolates $t_i$ from all other
terminals; hence called an \emph{isolating cut for $t_i$}
\cite{dahlhaus-94-complexity-of-multiway-cuts}.  A \emph{min-iso-cut
  for $t_i$} is a short hand for \emph{minimum isolating cut for
  $t_i$}.  Noting that Theorem~\ref{thm:max-vol-min-cut} applies to
isolating cuts, the \emph{max-vol min-iso-cut for $t_i$} can be
defined naturally.

\begin{theorem}[\cite{dahlhaus-94-complexity-of-multiway-cuts}]
  \label{thm:min-cut-safe}
  Let $(G, w, T, k)$ be an instance of \textsc{multiterminal cut}.
  For any min-iso-cut $X$ for $t_i\in T$, there exists a minimum
  multiterminal cut $\{V_1,\dots,V_p\}$ for $T$ such that $X \subseteq
  V_i$.
\end{theorem}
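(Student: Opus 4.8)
The plan is to start from an arbitrary minimum multiterminal cut $\{V_1,\dots,V_p\}$ for $T$, assume (after relabelling) that $i=1$, and show that \emph{absorbing} all of $X$ into the first part while \emph{evicting} it from every other part gives a multiterminal cut of no larger size. Concretely, I would put $V_1' = V_1 + X$ and $V_j' = V_j - X$ for $2\le j\le p$. Because $X$ is an isolating cut for $t_1$, it contains $t_1$ and no other terminal; hence $t_1\in V_1'$, and $t_j\in V_j'$ for every $j\ge 2$. Since the vertices of $X$ are moved into $V_1'$ and all other vertices keep their part, $\{V_1',\dots,V_p'\}$ is again a partition of $V$ into $p$ nonempty parts, i.e., a valid multiterminal cut for $T$; nonemptiness needs no extra work since $t_j\in V_j'$.

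For the size comparison I would split the size of any multiterminal cut into the weight $d(V_1)$ of edges incident to the first part and the weight $c$ of the remaining crossing edges, i.e., those with both ends in $V-V_1$ (equivalently, between two distinct parts among $V_2,\dots,V_p$); thus $\mathrm{size}(\{V_1,\dots,V_p\}) = d(V_1)+c$ and likewise $\mathrm{size}(\{V_1',\dots,V_p'\}) = d(V_1')+c'$. Since $V_j'\subseteq V_j$ for $j\ge2$ and $V-V_1'\subseteq V-V_1$, every edge contributing to $c'$ also contributes to $c$, so $c'\le c$ — deleting the vertices of $X$ from the other parts can only destroy crossing edges among them, never create them. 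For the first terms I would use submodularity of the cut function $d$, applied to the sets $X$ and $V_1$:
\[
 d(X)+d(V_1)\ \ge\ d(X\cap V_1)+d(X\cup V_1).
\]
The set $X\cap V_1$ contains $t_1$ and no other terminal, so it is itself an isolating cut for $t_1$; minimality of $X$ therefore gives $d(X)\le d(X\cap V_1)$, which substituted into the displayed inequality yields $d(V_1')=d(X\cup V_1)\le d(V_1)$. Adding the two estimates gives $\mathrm{size}(\{V_1',\dots,V_p'\})\le\mathrm{size}(\{V_1,\dots,V_p\})$; as the right-hand side is minimum, equality holds, so $\{V_1',\dots,V_p'\}$ is a minimum multiterminal cut with $X\subseteq V_1'$, proving the claim.

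I expect the only delicate point to be the size decomposition: one must verify that the edges joining $V_1$ to the other parts are counted exactly once, by $d(V_1)$, so that the remaining crossing edges genuinely live inside $V-V_1\supseteq V-V_1'$ and are handled by the trivial monotonicity $c'\le c$. The two supporting facts — submodularity of $d$, and that $X\cap V_1$ is still an isolating cut for $t_1$ — are routine, and the argument uses nothing about $X$ beyond its being a minimum isolating cut, so it applies verbatim to any min-iso-cut as stated.
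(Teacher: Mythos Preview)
Your argument is correct. The exchange $V_1' = V_1 + X$, $V_j' = V_j - X$ for $j\ge 2$ yields a valid multiterminal cut, the monotonicity $c'\le c$ for the crossing edges among the non-$t_1$ parts is immediate from $V_j'\subseteq V_j$, and the submodularity step together with the observation that $X\cap V_1$ is itself an isolating cut for $t_1$ gives $d(V_1')\le d(V_1)$ exactly as you wrote.

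As for comparison with the paper: note that the paper does \emph{not} supply its own proof of this theorem---it is quoted from Dahlhaus et al.\ and used as a black box. That said, your submodularity-based exchange argument is precisely the mechanism the paper itself deploys when it does prove related statements; see the proof of Lemma~\ref{lem:min+1}, where the same inequality $d(V_i)+d(C')\ge d(V_i\cup C')+d(V_i\cap C')$ combined with a lower bound on $d(V_i\cap C')$ is used to justify absorbing a set into $V_i$. So your proof is entirely in the spirit of the paper's toolkit, and in fact is the standard proof of this classical fact.
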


We say that a set function $f: 2^V \to \mathbb{Z}$ is
\emph{submodular} if every pair of subsets $X,Y\subseteq V$ satisfies
\begin{equation}
  f(X) + f(Y) \ge f(X\cap Y) + f(X\cup Y).
\end{equation}

The following property can be easily verified by definition (see,
e.g., \cite{nagamochi-10-algorithmic-connectivity}), and shall be the
swiss army knife in what follows.
\begin{theorem}\label{thm:submodular}
  The cut size function, $d: 2^V\to \mathbb{Z}$, is submodular.
\end{theorem}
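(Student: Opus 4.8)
The plan is to establish the inequality $d(X)+d(Y)\ge d(X\cap Y)+d(X\cup Y)$ directly, by a bookkeeping argument that splits $V$ into the four cells of the Venn diagram of $X$ and $Y$. Write $A=X\cap Y$, $B=X-Y$, $C=Y-X$, and $D=V-(X\cup Y)$; these four sets partition $V$, so every edge of $G$ has its two ends in two (not necessarily distinct) of them, and each of the four quantities $d(X)$, $d(Y)$, $d(X\cap Y)$, $d(X\cup Y)$ expands into a sum of terms $w(P,Q)$ with $P,Q$ ranging over $\{A,B,C,D\}$. Crucially, the edge sets $E(P,Q)$ appearing in these expansions are pairwise disjoint, so no cancellation subtleties arise.

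Concretely, I would record
\begin{align*}
  d(X) &= w(A,C)+w(A,D)+w(B,C)+w(B,D), \\
  d(Y) &= w(A,B)+w(A,D)+w(B,C)+w(C,D), \\
  d(X\cap Y) &= w(A,B)+w(A,C)+w(A,D), \\
  d(X\cup Y) &= w(A,D)+w(B,D)+w(C,D),
\end{align*}
and then subtract: every term cancels except a surplus of $2\,w(B,C)$ on the left-hand side. Since $w$ is nonnegative (indeed positive integer valued), $w(B,C)\ge 0$, which yields the claim; equality holds precisely when no edge joins $X-Y$ to $Y-X$.

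I do not anticipate a genuine obstacle here — the only point requiring care is the expansion of each $d(\cdot)$, i.e.\ correctly deciding, for each of the four sets, which of the four cells lie inside it and which lie outside. An alternative and arguably cleaner route is to write $d=\sum_{e\in E}w(e)\,d_e$, where $d_e$ is the cut function of the single edge $e$; since any nonnegative combination of submodular functions is submodular, it then suffices to verify submodularity of $d_e$, and for an edge $uv$ this reduces to a trivial case analysis on where $u$ and $v$ sit relative to $X$ and $Y$.
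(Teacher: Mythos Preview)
Your argument is correct: the four-cell decomposition $A=X\cap Y$, $B=X-Y$, $C=Y-X$, $D=V-(X\cup Y)$ gives exactly the expansions you wrote, and subtracting leaves $d(X)+d(Y)-d(X\cap Y)-d(X\cup Y)=2\,w(B,C)\ge 0$. The alternative edge-by-edge reduction you sketch is also fine.

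As for comparison: the paper does not actually prove this theorem. It states it as a standard fact (``can be easily verified by definition'') and cites a textbook. So your write-up supplies strictly more detail than the paper itself; the Venn-diagram bookkeeping you give is in fact the standard textbook verification, and nothing further is needed.
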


\begin{lemma}\label{lem:increment-isolating-cuts}
  Let ${T} = \{t_1, \dots, t_p\}$ be the set of terminals where $t_i$
  is the max-vol min-cut for ($t_i, T - t_i$) for each $1\le i\le p$.
  If the min-iso-cut size for a terminal decreases with the removal of
  an edge $e=uv$ , then any min-iso-cut for this terminal in $G - e$
  contains either $u$ or $v$, and its size decreases by at most
  $w(e)$.  The upper bound is reached if{f}
  the terminal is $u$ or $v$.  Moreover, the sizes of min-iso-cuts for
  at most two terminals in $G - e$ can be smaller than that in $G$.
\end{lemma}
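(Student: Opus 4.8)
The plan is to carry out everything with the two cut-size functions $d_G$ and $d_{G-e}$ together with the trivial observation that, for any vertex set $X$, $d_{G-e}(X)=d_G(X)-w(e)$ when $e$ has exactly one endpoint in $X$ (i.e.\ $e$ crosses the cut $X$) and $d_{G-e}(X)=d_G(X)$ otherwise. Since ``being an isolating cut for $t_i$'' is a purely combinatorial property of the vertex set, the family of isolating cuts for $t_i$ is the same in $G$ and in $G-e$; writing $m_i$ and $m_i'$ for the min-iso-cut sizes of $t_i$ in $G$ and in $G-e$, the observation gives $d_{G-e}(X)\ge d_G(X)-w(e)\ge m_i-w(e)$ for every isolating cut $X$ of $t_i$, hence $m_i'\ge m_i-w(e)$: this is the ``decreases by at most $w(e)$'' bound. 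The max-volume hypothesis will be needed only for the ``iff'' refinement, and only through the following consequence of Theorem~\ref{thm:max-vol-min-cut}: since $\{t_i\}$ is the max-vol min-iso-cut for $t_i$, every min-iso-cut of $t_i$ in $G$ is contained in $\{t_i\}$, so $\{t_i\}$ is the \emph{unique} min-iso-cut of $t_i$ in $G$ and $m_i=d_G(t_i)$.

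Next I would dispose of the membership and the ``iff'' claims. If $X$ is a min-iso-cut of $t_i$ in $G-e$ with $d_{G-e}(X)=m_i'<m_i$ and $e$ did \emph{not} cross $X$, then $d_{G-e}(X)=d_G(X)\ge m_i$, a contradiction; so $e$ crosses $X$, i.e.\ $X$ contains exactly one of $u,v$. For the refinement: if the decrease equals $w(e)$, pick such an $X$ with $d_{G-e}(X)=m_i-w(e)$; then $m_i-w(e)=d_{G-e}(X)\ge d_G(X)-w(e)\ge m_i-w(e)$ forces both $d_G(X)=m_i$ and $e$ crossing $X$. The first says $X$ is a min-iso-cut of $t_i$ in $G$, hence $X=\{t_i\}$ by uniqueness, and then $e$ crossing $\{t_i\}$ means $t_i\in\{u,v\}$. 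Conversely, if $t_i=u$ say, then $\{u\}$ is an isolating cut of $t_i$ crossed by $e$, so $m_i'\le d_{G-e}(\{u\})=d_G(u)-w(e)=m_i-w(e)$, which together with the lower bound from the first paragraph yields $m_i'=m_i-w(e)$ (in particular the min-iso-cut size really does drop, by exactly $w(e)$). Hence the drop is $w(e)$ precisely when $t_i\in\{u,v\}$ and is strictly smaller (possibly $0$) otherwise.

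The crux --- and the step I expect to need the most care --- is that the min-iso-cut size can shrink for at most two terminals. Here I would uncross with the submodularity of $d_{G-e}$ (Theorem~\ref{thm:submodular}) in its symmetric form $d_{G-e}(A)+d_{G-e}(B)\ge d_{G-e}(A-B)+d_{G-e}(B-A)$, which follows by applying submodularity to $A$ and $V-B$ and using $d(S)=d(V-S)$. Suppose $t_1\ne t_2$ are two terminals whose min-iso-cut size shrinks, and choose min-iso-cuts $X_1,X_2$ for them in $G-e$; by the membership claim each $X_i$ contains exactly one endpoint of $e$. I claim they cannot contain the \emph{same} endpoint: assume $u\in X_1\cap X_2$ and $v\notin X_1\cup X_2$. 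Since $X_i$ contains no terminal other than $t_i$, the sets $X_1-X_2$ and $X_2-X_1$ are still isolating cuts for $t_1$ and $t_2$, so $d_{G-e}(X_1-X_2)\ge m_1'$ and $d_{G-e}(X_2-X_1)\ge m_2'$; feeding these into $d_{G-e}(X_1)+d_{G-e}(X_2)\ge d_{G-e}(X_1-X_2)+d_{G-e}(X_2-X_1)$ with $d_{G-e}(X_i)=m_i'$ forces equality throughout, so $X_1-X_2$ is itself a min-iso-cut of $t_1$ in $G-e$. But $X_1-X_2$ contains neither $u$ (it lies in $X_2$) nor $v$ (it lies outside $X_1$), so $e$ does not cross it; therefore $d_G(X_1-X_2)=d_{G-e}(X_1-X_2)=m_1'$, while $X_1-X_2$ being an isolating cut of $t_1$ in $G$ gives $d_G(X_1-X_2)\ge m_1$ --- contradicting $m_1'<m_1$. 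The mirror argument forbids both cuts containing $v$. So the chosen $G-e$-min-iso-cuts of any two shrinking terminals occupy different endpoints of $e$, and since there are only two endpoints a pigeonhole count shows at most two terminals can shrink. The point to watch is that min-iso-cuts in $G-e$ need not be unique, so ``contains exactly one endpoint of $e$'' must be re-established for whichever representatives one picks before the pigeonhole step is invoked.
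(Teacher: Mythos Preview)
Your proof is correct and follows essentially the same route as the paper's: both establish the endpoint-membership claim via $d_{G-e}(X)=d_G(X)$ when $e$ does not cross $X$, and both handle the ``at most two terminals'' assertion by uncrossing two min-iso-cuts $X_1,X_2$ that share an endpoint to produce the min-iso-cut $X_1\setminus X_2$ containing neither $u$ nor $v$. The only cosmetic difference is that you invoke the posiform submodular inequality $d(A)+d(B)\ge d(A\setminus B)+d(B\setminus A)$ directly, whereas the paper carries out the equivalent computation by hand with the edge-weight identities $w(Y,C_i-Y)\ge w(Y,V-C_i)$; your packaging is arguably cleaner but the content is the same.
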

  \begin{proof}
    Let $C_i$ be any min-iso-cut for $t_i$ in $G - e$.  If neither $u$
    nor $v$ is in $C_i$, then we must have $C_i = t_i$, as otherwise
    $d(C_i) > d(t_i)$, which contradicts the definition of
    min-iso-cut; for the same reason, $u$ and $v$ cannot be both
    contained in $C_i$.  Hence we assume $|\{u,v\}\cap C_i| = 1$.  The
    size of min-iso-cuts for $t_i$ decreases by $d(t_i) - (d(C_i) -
    w(e) )= w(e) + ( d(t_i) - d(C_i) ) \le w(e)$.  To achieve the
    upper bound, we must have $C_i = t_i$, which implies $t_i\in
    \{u,v\}$.

    Suppose to the contrary of the last assertion, both $C_i$ and
    $C_j$, where $i\ne j$, contain $u$.  Then neither of them contains
    $v$.  Let us consider $Y = C_i \cap C_j$.  The minimality of $C_i$
    implies $w(Y, C_i - Y) = w(Y, V - C_i)$; likewise $w(Y, C_j - Y) =
    w(Y, V - C_j)$.  We can conclude that $w(Y, V - (C_i \cup C_j)) =
    0$ and then $d(C_i\setminus C_j) = d(C_i)$.  Therefore,
    $C_i\setminus C_j$ is also a min-iso-cut for $t_i$ in $G - e$; a
    contradiction to the first assertion.  We have seen that $u$ is
    contained in the min-iso-cut for at most one terminal, and the
    same argument applies to $v$.  We can thus conclude the last
    assertion and the lemma.
  \end{proof}
\section{Distances to a terminal}
\label{sec:distances}

For the problem under concern, it will make no difference if a
terminal is replaced by a subset of vertices; they can always be
\emph{merged} into a single vertex.  By merging a subset $U$ of
vertices into a single vertex $u$, we remove all vertices in $U$,
introduce a new vertex $u$, and for each vertex $v\in N(U)$, set the
weight of the new edge $uv$ to be $w(U,v)$.  In particular, if $U$
contains a terminal $t_i$, then the newly introduced vertex will be
also identified as $t_i$; note that we never merge a vertex set
containing more than one terminal.  We remark that this operation does
not introduce non-integral weights.  If a non-terminal vertex $v$ is
decided to be in $V_i$, we can merge $\{v,t_i\}$ into $t_i$.
Theorem~\ref{thm:min-cut-safe} might be informally interpreted as: a
vertex whose merge into a terminal does not increase the min-iso-cuts
size for it can always be safely merged.  This observation inspires us
to consider the increment of the size of min-iso-cuts for $t_i$ with
the merge of a non-terminal vertex $v$ to it.  Noting that for two
disjoint sets $S$ and $T$, and a proper subset $S'$ of $S$, the
max-vol min-cut for ($S',T$) is a subset of that for ($S,T$), thereby
having a size no larger than the latter.

\begin{definition}
  Let $v$ be a non-terminal vertex and $t_i$ be a terminal.  Let $C'$
  and $C$ denote the max-vol min-cuts for ($t_i + v, T - t_i$) and,
  respectively, ($t_i, T - t_i$).  The \emph{extension} of $v$ from
  $t_i$ is defined to be $X_i(v) = C' - C$, and the \emph{distance} of
  $v$ from $t_i$ is defined to be $ ext_i(v) = d(C') - d(C)$.
\end{definition}
\\*
Using Theorem~\ref{thm:max-vol-min-cut}, we can compute the extension
and distance for any pair of non-terminal vertex $v$ and terminal
$t_i$ in polynomial time.  By definition, an extension contains no
terminal, and a distance is always nonnegative.  In particular, the
extension is empty if and only if the distance is $0$, then $v\in C$.
Although the set $C'$ might not be connected (even when $C$ is
connected), the extension always is:

\begin{lemma}
  \label{lem:X-connected}
  For any non-terminal vertex $v$ and terminal $t_i$ , the subgraph
  induced by $X_i(v)$ is connected.
\end{lemma}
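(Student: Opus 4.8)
The plan is a proof by contradiction. Recall that $C\subseteq C'$ (a max-vol min-cut for a smaller source is contained in that for a larger source, as noted before the definition of extension), so $X_i(v) = C'-C$. If $X_i(v)=\emptyset$ the statement is trivial; equivalently $ext_i(v)=0$ and $v\in C$. So assume $X_i(v)\neq\emptyset$, which by the remark after the definition forces $v\notin C$, hence $v\in C'-C=X_i(v)$. Suppose for contradiction that $G[X_i(v)]$ is disconnected. Grouping its connected components, we can write $X_i(v)=A\uplus B$ with $A,B$ nonempty and no edge of $G$ joining $A$ to $B$, i.e.\ $w(A,B)=0$; and, crucially, we choose the labelling so that $v\in A$.

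The heart of the argument is to compare the two cuts $C\cup A$ and $C\cup B$. On one hand, $C\cup A$ contains both $t_i$ and $v$ and satisfies $C\cup A\subseteq C'$, so it is disjoint from $T-t_i$; thus it is a $(t_i+v,T-t_i)$-cut, and minimality in the definition of $C'$ gives $d(C\cup A)\ge d(C')$. On the other hand, $C\cup B$ contains $t_i$ and lies inside $C'$, so it is a $(t_i,T-t_i)$-cut, i.e.\ an isolating cut for $t_i$, whence $d(C\cup B)\ge d(C)$ by minimality of $C$.

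Next I compute the two cut sizes explicitly, using the disjoint decompositions $V-C=A\uplus B\uplus(V-C')$ and $C'=C\uplus A\uplus B$ together with $w(A,B)=0$. A direct edge count yields $d(C\cup A)-d(C') = w(C,B)-w(B,V-C')$ and $d(C\cup B)-d(C) = w(B,V-C')-w(C,B)$, so these two differences are exactly negatives of one another (equivalently: submodularity of $d$, Theorem~\ref{thm:submodular}, is tight on the pair $C\cup A,\ C\cup B$ precisely because their symmetric-difference parts $A$ and $B$ exchange no edges). Combining with $d(C\cup A)\ge d(C')$ we get $d(C\cup B)\le d(C)$, and with the inequality from the previous paragraph we conclude $d(C\cup B)=d(C)$: the set $C\cup B$ is itself a min-iso-cut for $t_i$. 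But $B\neq\emptyset$ and $B\cap C=\emptyset$, so $C\cup B\supsetneq C$, contradicting the fact that $C$ is the \emph{maximum volume} min-iso-cut for $t_i$ (Theorem~\ref{thm:max-vol-min-cut}). (Symmetrically, the same arithmetic gives $d(C\cup A)=d(C')$ with $C\cup A\subsetneq C'$, contradicting the maximality of $C'$ instead.)

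The edge-counting identity is routine; the only things one must be careful about are (i) keeping straight which of $C$, $C'$ supplies the minimality bound and which supplies the maximality contradiction, and (ii) placing $v$ in the component-union $A$ that we grow toward $C'$ — otherwise $C\cup A$ need not be a legitimate $(t_i+v,T-t_i)$-cut and the chain of inequalities breaks. Beyond that I anticipate no real obstacle.
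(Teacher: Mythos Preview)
Your proof is correct and follows essentially the same approach as the paper: split $X_i(v)$ into two nonadjacent pieces (with $v$ in one of them), use the additivity identity $d(C\cup A)+d(C\cup B)=d(C')+d(C)$ arising from $w(A,B)=0$, and combine it with the minimality/maximality properties of $C$ and $C'$. The only cosmetic difference is where the contradiction lands: the paper uses the strict inequality $d(C+B)>d(C)$ (from the max-vol property of $C$) to deduce $d(C+A)<d(C')$, contradicting the \emph{minimality} of $C'$; you instead start from $d(C\cup A)\ge d(C')$ and close by contradicting the \emph{maximality} of $C$.
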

  \begin{proof}
    We may assume $X_i(v)$ is nonempty.  Suppose, for contradiction,
    that $X_i(v)$ has a partition $\{Y,Z\}$ such that there is no edge
    between $Y$ and $Z$, and without loss of generality, $v \in Y$.
    Let $C$ be the max-vol min-iso-cut for $t_i$.  Since $Y$ and $Z$
    are disconnected and disjoint from $C$, we have $ d(C + Z) >
    d(C)$, and
    \[
     d(C + X_i(v)) -  d(C) =  d(C + Y) -  d(C)
    +  d(C + Z) -  d(C) >  d(C + Y) -  d(C),
    \]
    which means $C + Y$ is a strictly smaller cut for ($t_i + v, T -
    t_i$) than $C + X_i(v) = C + Y + Z$; a contradiction.  
  \end{proof}

  Let us now put those vertices in the immediate vicinity to terminal
  $t_i$, i.e., vertices whose distances to $t_i$ are exactly $1$,
  under a closer scrutiny.

\begin{lemma}
  \label{lem:min+1}
  Let $v$ be a vertex at distance $1$ to terminal $t_i$.  There exists
  a minimum {multiterminal cut} for $T$ that keeps $X_i(v)$ in the
  same part.
\end{lemma}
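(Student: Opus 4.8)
The plan is to start from an arbitrary minimum multiterminal cut $\{V_1,\dots,V_p\}$ and modify it so that $X_i(v)$ sits entirely inside $V_i$, without increasing the total size. Let $C$ denote the max-vol min-iso-cut for $t_i$, so that $X_i(v) = C' - C$ where $C'$ is the max-vol min-cut for $(t_i+v, T-t_i)$ and $d(C') = d(C) + 1$. By Theorem~\ref{thm:min-cut-safe} applied after we know the situation, $C \subseteq V_i$ may be assumed; more carefully, since any minimum multiterminal cut can be assumed to contain $C$ in its $i$-th part (Theorem~\ref{thm:min-cut-safe}), fix such a cut. The goal is to show that replacing $V_i$ by $V_i \cup X_i(v)$, and shrinking the other parts accordingly (i.e.\ $V_j \mapsto V_j - X_i(v)$ for $j \neq i$), does not increase the size.

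The key step is a submodularity/uncrossing argument comparing two things: the cut $C' = C + X_i(v)$, which is a min-iso-cut for $t_i + v$ and has size exactly $d(C)+1$, versus the part $V_i$ (which contains $C$ and is an isolating cut for $t_i$). First I would record that since $V_i \supseteq C$ is an isolating cut for $t_i$ and $C$ is a \emph{minimum} one, $d(V_i) \ge d(C)$; and since $V_i + X_i(v)$ is an isolating cut for $t_i$ as well (adding non-terminal vertices keeps it isolating), $d(V_i + X_i(v)) \ge d(C)$. Now apply submodularity of $d$ (Theorem~\ref{thm:submodular}) to the pair $V_i$ and $C'$:
\[
  d(V_i) + d(C') \ge d(V_i \cap C') + d(V_i \cup C').
\]
Here $V_i \cup C' = V_i + X_i(v)$ (since $C \subseteq V_i$), and $V_i \cap C' \supseteq C$ is an isolating cut for $t_i$, so $d(V_i \cap C') \ge d(C)$. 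Combined with $d(C') = d(C)+1$ this yields $d(V_i) + d(C) + 1 \ge d(C) + d(V_i + X_i(v))$, i.e.\ $d(V_i + X_i(v)) \le d(V_i) + 1$. That is still one too many, so this crude bound alone is not enough; the refinement is to also use the minimality of $C'$ in the \emph{other} direction. Since $C'$ is the max-vol min-cut for $(t_i+v, T-t_i)$ and $V_i \cap C'$ is an isolating cut for $t_i+v$ \emph{provided} $v \in V_i$ — and if $v \notin V_i$ we first move $v$ into $V_i$, which is handled by the distance-$1$ hypothesis via Theorem~\ref{thm:max-vol-min-cut} — we get $d(V_i \cap C') \ge d(C') = d(C)+1$, not merely $\ge d(C)$. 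Plugging this stronger inequality back in gives $d(V_i + X_i(v)) \le d(V_i)$, which is exactly what we need for the $i$-th part.

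It remains to check that the other parts do not get worse. The natural move is $V_j \mapsto V_j - X_i(v)$ for each $j \neq i$; since $X_i(v)$ contains no terminal (extensions are terminal-free), each $V_j - X_i(v)$ still contains $t_j$, so the result is a genuine multiterminal cut for $T$. For the size: the total size is $\tfrac12\sum_j d(V_j)$, and we have shown $d(V_i + X_i(v)) \le d(V_i)$; one then argues that the crossing-edge count viewpoint makes this automatic — moving a vertex set from the union of the non-$i$ parts into $V_i$ changes the total weight of crossing edges by exactly $d(V_i + X_i(v)) - d(V_i)$ minus internal readjustments, and the connectedness of $X_i(v)$ (Lemma~\ref{lem:X-connected}) together with the fact that $X_i(v)$ was entirely outside $V_i$ guarantees no new crossings are created among the old non-$i$ parts beyond what is accounted for. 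I expect the main obstacle to be precisely this bookkeeping: making rigorous that the \emph{only} change in the crossing-edge weight is captured by $d(V_i+X_i(v)) - d(V_i)$, which requires either a clean edge-counting identity or a second application of submodularity to the complement side, and handling the case $v\notin V_i$ cleanly so that the strengthened inequality $d(V_i\cap C') \ge d(C')$ is actually licensed.
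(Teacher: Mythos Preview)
Your overall architecture --- start from a minimum multiterminal cut with $C\subseteq V_i$, replace $V_i$ by $V_i\cup X$, and argue the size does not increase --- is exactly the paper's. Your submodularity step for the case $v\in V_i$ is also essentially the paper's Case~1, though the paper phrases it slightly more generally: it splits on whether $X\cap V_i$ is empty, and in the nonempty case observes that $V_i\cap C'$ \emph{strictly} contains the max-vol min-iso-cut $C$, hence $d(V_i\cap C')\ge d(C)+1=d(C')$ by integrality and the max-vol property, and then submodularity gives $d(V_i\cup X)\le d(V_i)$. Together with the trivial fact that $E(V'_j,V'_l)\subseteq E(V_j,V_l)$ for $j,l\ne i$, that case is finished.

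The genuine gap is the case $X\cap V_i=\emptyset$ (in particular $v\notin V_i$). Your proposed patch, ``first move $v$ into $V_i$, handled by the distance-$1$ hypothesis via Theorem~\ref{thm:max-vol-min-cut}'', does not work: Theorem~\ref{thm:max-vol-min-cut} says nothing about moving a single vertex across a multiterminal partition without cost, and indeed in this case one may have $d(V_i\cup X)=d(V_i)+1$, so the inequality $d(V_i\cup X)\le d(V_i)$ you are aiming for can simply be false. Submodularity alone only yields $d(V_i\cup X)\le d(V_i)+1$, which is the ``one too many'' you noticed; there is no refinement on the $V_i$ side that closes that gap. The paper instead abandons the $d(V'_i)\le d(V_i)$ route here and counts crossing edges directly: the new crossing edges of $\mathcal P'$ not in $\mathcal P$ lie in $E(X,V-C')$, while the crossing edges of $\mathcal P$ lost in $\mathcal P'$ include all of $E(V_i,X)$ \emph{and} at least one edge inside $G[X]$ --- the latter because $X$ is connected (Lemma~\ref{lem:X-connected}) and, by the standing assumption that $X$ is not contained in any single part together with $X\cap V_i=\emptyset$, is split across at least two parts $V_j$ with $j\ne i$. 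Since $w(X,V-C')-w(V_i,X)\le w(X,V-C')-w(C,X)=ext_i(v)=1$, that single internal edge of weight $\ge 1$ is exactly what cancels the surplus. This use of connectivity is the missing idea in your plan, not merely bookkeeping.
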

  
\begin{proof}
  We denote by $C$ and $C'$ the max-vol min-cuts for ($t_i, T - t_i$)
  and ($t_i + v, T - t_i$), respectively, and let $X = X_i(v)$.  We
  prove this lemma by constructing a claimed cut.  Let ${\cal P} =
  \{V_1, V_2, \dots, V_p\}$ be a minimum multiterminal cut for $T$
  satisfying $C\subseteq V_i$; its existence is ensured by
  Theorem~\ref{thm:min-cut-safe}.  If $X\subset V_j$ for some $1\le
  j\le p$, then we are already done; hence we assume otherwise.  We
  claim that ${\cal P}' = \{V'_1, V'_2, \dots, V'_t\}$, where $V'_i =
  V_i \cup X$ and $V'_j = V_j \setminus X$ for $j\ne i$, is also a
  minimum multiterminal cut for $T$.  The remaining discussion is
  separated based on $X \cap V_i = \emptyset$ or not.

  If $X \cap V_i \neq \emptyset$, then $C$ is proper subset of $V_i
  \cap C'$, which implies $d(C) < d(V_i \cap C')$.  Noting all values
  are integral, we have $d(C') = d(C) + 1 \le d(V_i \cap C')$.  By the
  submodularity of the cut function (Theorem~\ref{thm:submodular}), we
  have $d(V_i) + d(C') \geq d(V_i \cup C') + d(V_i \cap C')$.
  Combining them, we get
    \[
      d(V_i) \geq  d(V_i \cup C') = d(V_i \cup X) =  d(V'_i).
    \]
    Noting that $E(V'_j,V'_l)\subseteq E(V_j,V_l)$ for any pair of
    $j,l$ different from $i$, this case is proved.

    Otherwise, $X \cap V_i = \emptyset$.  The crossing edges of ${\cal
      P}'$ but not of ${\cal P}$ are a subset of $E(X, V - {C'})$;
    while the crossing edges of ${\cal P}$ but not of ${\cal P}'$
    include $E(V_i,X)$ and at least one edge $e$ in $G[X]$: noting
    that by Lemma~\ref{lem:X-connected}, $G[X]$ was originally
    connected. Therefore the difference of the cut sizes of ${\cal
      P}'$ and ${\cal P}$ is at most:
    \[
    w(X, V - {C'}) - w(V_i, X) - w(e) \leq ext_i(u) - 1 = 0,
    \]
    which justifies the optimality of ${\cal P}'$ and finishes the
    proof.
  
\end{proof}

Lemma~\ref{lem:min+1} permits us to merge each extension $X_i(v)$ with
$ext_i(v) = 1$ into a single non-terminal vertex.  After both the
max-vol min-iso-cut for $t_i$ and $X_i(v)$ have been merged into $t_i$
and $v$ respectively, either they are adjacent, or the new vertex $v$
is only incident to a unit-weight edge.  As vertices of degree $1$ can
always be removed safely, we may assume that all vertices at unit
distances to $t_i$ are its neighbors.

\section{The algorithm}

From Theorems \ref{thm:max-vol-min-cut} and \ref{thm:min-cut-safe},
one can easily derive a bounded search tree algorithm as follows.  For
$1\le i\le p$, we initialize $V_i$ to be the max-vol min-iso-cut for
$t_i$ and merge it into $t_i$.  Then we grow it by including its
neighbors one by one, until no vertex is left out.  On each neighbor
$v$ of $t_i$, we have two options: including it and merging
$\{v,t_i\}$ into $t_i$, or excluding it and then edge $(t_i,v)$ is a
crossing edge.
This branching rule was first presented by Chen et
al.~\cite{chen-09-multiway-cut} and later used by Xiao
\cite{xiao-10-multiterminal-cut}.  The original analysis in
\cite{chen-09-multiway-cut} uses $2k - d(t_i)$ as the measure, which
delivers a $4^k$ bound on the number of leaves the algorithm traverses
in the search tree.  Inspired by the observation on isolating cuts,
Xiao \cite{xiao-10-multiterminal-cut} used the new measure $2k - \sum
d(t_i)$.  As it always holds that $\sum d(t_i)>k$ in a nontrivial
instance, the bound can be tightened to $2^k$.  This analysis is then
tight: there are instances of \textsc{multiterminal cut} on which this
branching rule will make a search tree of $2^k$ leaves.  In other
words, the base $2$ cannot be further lowered by analysis.

Let $m$ be the measure $2k - \sum d(t_i)$.  The bound $2^k$ follows
from $m\le k$ and the branching vector ($1, 1$): it forks into at most
two branches, in each of which the measure decreases by at least
$1$.\footnote{We shall use standard technique to analyze the bounded
  search tree algorithm (see, e.g.,
  \cite[Theorem~2.1]{fomin-11-exact-algorithms}).  We say that a branching rule
  has \emph{branch vector} ($\tau_1, \tau_2,\dots, \tau_r$) if, given
  an input instance of measure $m$, it branches into $r$ instances,
  and the measures in them are at most $m-\tau_1, m-\tau_2, \dots,
  m-\tau_r$, respectively.  With such a branch vector, the linear
  recurrence for the maximum number of leaves is
\[
{\cal T}(m) \le {\cal T}(m - \tau_1) + {\cal T}(m - \tau_2) + \dots
+{\cal T}(m - \tau_r),
\]
and then ${\cal T}(m) \le c^m$, where $c$ is the unique positive real
root of
\[
z^m - z^{m - \tau_1} - z^{m - \tau_2} - \dots - z^{m - \tau_r} = 0.
\]} To break the $2^k$ barrier, a branching vector strictly better
than ($1, 1$) is in need.  Observing that the inclusion of a vertex
$v$ always increases $\sum d(t_i)$ by the distance of $v$ to the
terminal, vertices at distance $2$ or more will not concern us; hence
we only need to concentrate on those vertices at unit distance to a
terminal.  On these vertices we design new branching rules, worst of
which has branching vector ($1, 2, 3$); thereby achieving the claimed
bound of Theorem~\ref{thm:main}.

Let $v$ be a non-terminal vertex.  If it has a unique neighbor $u$,
then $v$ and $u$ will always be together; if it has two neighbors $u$
and $u'$, and without loss of generality, $w(v,u) \ge w(v,u')$, then
there always exists a minimum multiterminal cut for $T$ that keeps $u$
and $v$ together.  This holds regardless whether $u$ and/or $u'$ is a
terminal or not.  
In both cases, we may merge $v$ to $u$.  Now we are ready to present
our algorithm in Figure~\ref{fig:alg}.  Every time when the terminal
$t_p$ has been separated from the rest of the graph, i.e., the part
$V_p$ has been identified, we treat the rest of the graph as a new
instance with one less terminal.  Nonessential particulars on the
bookkeeping and recovering of merge operations is omitted for the
simplicity.  When the value of $h = \sum d(t_i)$ changes, it is marked
at the end of the line that makes it happen; $c^+$ means at least $c$.

\begin{figure*}[h]
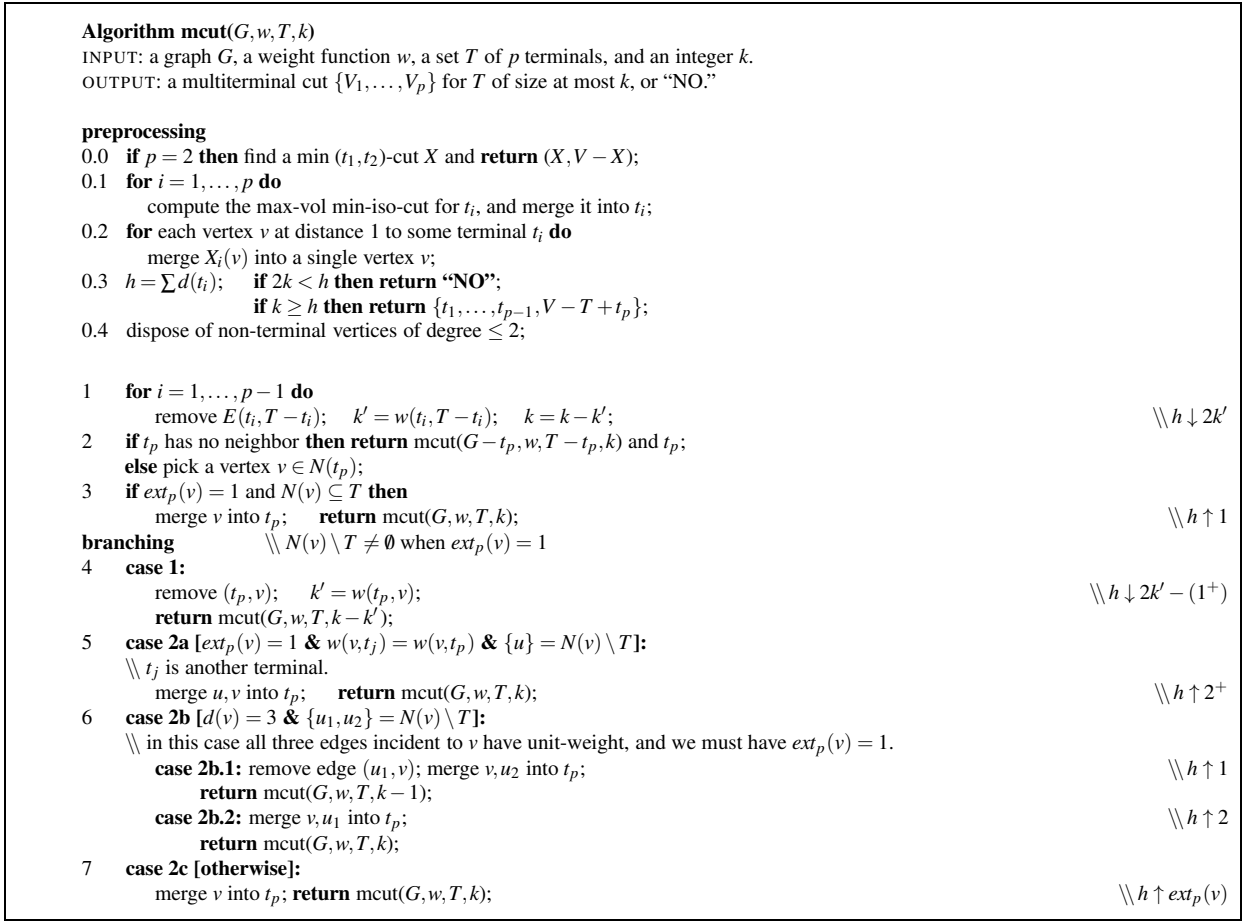

\setbox4=\vbox{\hsize28pc \noindent\strut
\begin{quote}
  \vspace*{-5mm} \footnotesize
  {\bf Algorithm mcut($G, w, {T},k$)}\\
  {\sc input}: a graph $G$, a weight function $w$, a set $T$ of $p$
  terminals, and an integer $k$.
  \\
  {\sc output}: a multiterminal cut $\{V_1,\dots,V_p\}$ for $T$ of
  size at most $k$, or ``NO.''
  \\
  \\
  {\bf preprocessing}
  \\
  0.0 \hspace*{1mm} {\bf if} $p = 2$ {\bf then} find a min
  ($t_1,t_2$)-cut $X$ and {\bf return} ($X,V-X$);
  \\
  0.1 \hspace*{1mm} {\bf for} $ i = 1, \dots, p$ {\bf do}
  \\
  \hspace*{8mm} compute the max-vol min-iso-cut for $t_i$, and merge
  it into $t_i$;
  \\
  0.2 \hspace*{1mm} {\bf for} each vertex $v$ at distance $1$ to some
  terminal $t_i$ {\bf do}
  \\
  \hspace*{8mm} merge $X_i(v)$ into a single vertex $v$;
  \\
  0.3 \hspace*{1mm} $h = \sum d(t_i)$;
  $\quad$ \parbox[t]{0.85\linewidth}{ {\bf if} $2k < h$ {\bf then
      return ``NO''};
    \\
    {\bf if} $k \ge h$ {\bf then return} $\{t_1, \dots, t_{p-1}, V -
    {T} + t_p \}$;}
  \\
  0.4 \hspace*{1mm} dispose of non-terminal vertices of degree $\le
  2$;
  \\

  1 \hspace*{3mm} {\bf for} $ i = 1, \dots, p-1$ {\bf do}
  \\
  \hspace*{9mm} remove $E(t_i, T - t_i)$; $\quad k' = w(t_i, T -
  t_i)$; $\quad k=k- k'$; \hfill $\setminus\!\!\!\setminus h\downarrow
  2k'$
  \\
  2 \hspace*{3mm} {\bf if} $t_p$ has no neighbor {\bf then return}
  mcut($G - t_p, w, {T - t_p}, k$) and $t_p$;
  \\
  \hspace*{5mm} {\bf else} pick a vertex $v\in N(t_p)$;
  \\
  3 \hspace*{3mm} {\bf if} $ext_p(v) = 1$ and $N(v)\subseteq T$ {\bf
    then}
  \\
  \hspace*{9mm} merge $v$ into $t_p$; $\quad$ {\bf return} mcut($G, w,
  {T}, k$); \hfill $\setminus\!\!\!\setminus h\uparrow 1$
  \\*
  {\bf branching} $\qquad\qquad
  \setminus\!\!\setminus$ $N(v)\setminus T\ne\emptyset$ when $ext_p(v)
  = 1$
  \\
  4 \hspace*{3mm} {\bf case 1:}
  \\
  \hspace*{9mm} remove $(t_p, v)$; $\quad$ $k' = w(t_p, v)$; \hfill
  $\setminus\!\!\!\setminus h\downarrow 2k'-(1^+)$
  \\
  \hspace*{9mm} {\bf return} mcut($G, w, {T}, k - k'$);
  \\
  5 \hspace*{3mm} {\bf case 2a [$ext_p(v) = 1$ \& $w(v, t_j) = w(v,
    t_p)$ \& $\{u\} = N(v) \setminus T$]:}
  \\
  \hspace*{5mm} $\setminus\!\!\setminus$ $t_j$ is another terminal.
  \\
  \hspace*{9mm} merge $u,v$ into $t_p$; $\quad$ {\bf return} mcut($G,
  w, {T}, k$); \hfill $\setminus\!\!\!\setminus h\uparrow 2^+$
  \\
  6 \hspace*{3mm} {\bf case 2b [$ d(v)=3$ \& $\{u_1,u_2\} = N(v)
    \setminus T$]:}
  \\
  \hspace*{5mm} $\setminus\!\!\setminus$ in this case all three edges
  incident to $v$ have unit-weight, and we must have $ext_p(v)=1$.
  \\
  \hspace*{9mm} {\bf case 2b.1:} remove edge $(u_1, v)$; merge $v,u_2$
  into $t_p$; \hfill $\setminus\!\!\!\setminus h\uparrow 1$
  \\
  \hspace*{15mm} {\bf return} mcut($G, w, {T}, k - 1$);
  \\
  \hspace*{9mm} {\bf case 2b.2:} merge $v,u_1$ into $t_p$; \hfill
  $\setminus\!\!\!\setminus h\uparrow 2$
  \\
  \hspace*{15mm} {\bf return} mcut($G, w, {T}, k$);
  \\
  7 \hspace*{3mm} {\bf case 2c [otherwise]:}
  \\
  \hspace*{9mm} merge $v$ into $t_p$; {\bf return} mcut($G, w, {T},
  k$); \hfill $\setminus\!\!\!\setminus h\uparrow ext_p(v)$

\end{quote} \vspace*{-6mm} \strut} $$\boxit{\box4}$$
\vspace*{-9mm}
\caption{Algorithm for \textsc{multiterminal cut}}
\label{fig:alg}
\end{figure*}

In one case of our analysis, we will need the following lemma.
\begin{lemma}\label{lem:case1}
  In case 1 (step 4) of algorithm {\bf mcut}, $h$ decreases by at
  most $2w(t_p, v) - 1$.  Moreover, when $w(t_p, v) > 1$ and $ext_p(v)
  = 1$, the value of $h$ decreases by $2 w(t_p, v) - 1$ if and only if
  there is another terminal $t_j$ such that $w(t_p, v) = w(t_j, v)$.
\end{lemma}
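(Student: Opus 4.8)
The plan is to use Lemma~\ref{lem:increment-isolating-cuts} to pin down exactly which terminals lose min-iso-cut weight when $e=(t_p,v)$ is deleted in step~4, and by how much. Write $c=w(t_p,v)$ and let $m_i$ be the min-iso-cut size for $t_i$ in the graph $G$ present at step~4; note that in this graph each $t_i$ is still its own max-vol min-iso-cut — the merges of the preprocessing and the deletion of $E(t_i,T-t_i)$ in step~1 preserve this, since removing $E(t_i,T-t_i)$ lowers the size of \emph{every} cut isolating a fixed $t_i$ (and of every cut isolating $t_i$ plus one extra vertex) by the same constant $w(t_i,T-t_i)$. Because $v\in N(t_p)$, the cut $\{t_p\}$ is still an isolating cut for $t_p$ in $G-e$, of size $m_p-c$, so the min-iso-cut size for $t_p$ falls to at most $m_p-c$; by Lemma~\ref{lem:increment-isolating-cuts} it cannot fall by more than $c$, hence it falls by exactly $c$. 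The same lemma limits to one the number of \emph{other} terminals whose min-iso-cut size can shrink, and since such a terminal $t_j$ is not an endpoint of $e$, its decrease is strictly below $c$, i.e.\ at most $c-1$. Adding up, $h=\sum_i d(t_i)$ decreases by at most $c+(c-1)=2c-1$, which is the first claim.

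For the ``moreover'' part I would first extract the key identity. Assume $c>1$ and $ext_p(v)=1$. The preprocessing (step~0.2 together with the discussion after Lemma~\ref{lem:min+1}) has contracted the extension $X_p(v)$ to the single vertex $v$, so the max-vol min-cut for $(t_p+v,T-t_p)$ is $\{t_p,v\}$ and $d(\{t_p,v\})=m_p+1$; expanding $d(\{t_p,v\})=m_p+d(\{v\})-2c$ gives $d(\{v\})=2c+1$. By the first part, $h$ drops by exactly $2c-1$ if and only if some terminal $t_j\ne t_p$ has its min-iso-cut size drop by exactly $c-1$, so it suffices to show: this happens if and only if $w(t_j,v)=c$ for some terminal $t_j$.

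The easy direction: if $w(t_j,v)=c$ for a terminal $t_j$, then $d(\{t_j,v\})=m_j+d(\{v\})-2c=m_j+1$, and since $v\in\{t_j,v\}$ while $t_p\notin\{t_j,v\}$ the edge $e$ crosses $\{t_j,v\}$, so in $G-e$ the cut $\{t_j,v\}$ isolates $t_j$ with size $m_j+1-c\le m_j-1$; thus the min-iso-cut size for $t_j$ shrinks, and matching the lower bound $c-1$ against the upper bound $c-1$ from Lemma~\ref{lem:increment-isolating-cuts} it shrinks by exactly $c-1$. For the converse, suppose some $t_j$ has its min-iso-cut size drop by exactly $c-1$ and let $C_j$ be any min-iso-cut for $t_j$ in $G-e$. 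By Lemma~\ref{lem:increment-isolating-cuts}, $C_j$ contains exactly one endpoint of $e$; it cannot contain the terminal $t_p$, so $v\in C_j$ and $e$ crosses $C_j$, whence $d(C_j)=(m_j-(c-1))+c=m_j+1$ back in $G$. Then $C_j$ witnesses $ext_j(v)\le 1$, and since after step~0.1 no non-terminal vertex sits at distance $0$ from a terminal, $ext_j(v)=1$; invoking the preprocessing once more, $\{t_j,v\}$ is the max-vol min-cut for $(t_j+v,T-t_j)$, so $d(\{t_j,v\})=m_j+1$, and expanding as above forces $w(t_j,v)=(d(\{v\})-1)/2=c=w(t_p,v)$.

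The hard part is precisely this converse equivalence — turning ``the min-iso-cut for $t_j$ dropped by $c-1$'' into ``$w(t_j,v)=w(t_p,v)$''. The argument above routes it through $ext_j(v)=1$ and then through the structural fact that a distance-$1$ vertex has a singleton extension and is adjacent to its terminal; so the genuinely delicate point is to check that this preprocessing invariant, and the max-vol property needed to invoke Lemma~\ref{lem:increment-isolating-cuts}, still hold at step~4. As observed in the first paragraph, this reduces to the fact that removing $E(t_i,T-t_i)$ shifts the size of every relevant cut for $t_i$ by the same amount, leaving untouched both the identity of the max-vol minimum cut and the value $ext_i(v)$; steps~2 and~3 do not modify the graph before step~4 is reached.
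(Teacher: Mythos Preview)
Your proof is correct and follows essentially the same route as the paper's: both invoke Lemma~\ref{lem:increment-isolating-cuts} for the first assertion, and for the ``only if'' direction both deduce $ext_j(v)=1$ for the second affected terminal, then combine the preprocessing identity $X_j(v)=\{v\}$ (giving $d(v)=2w(t_j,v)+1$) with the analogous identity from $ext_p(v)=1$ to force $w(t_j,v)=w(t_p,v)$. The only differences are presentational---you compute $d(v)=2c+1$ up front and are more explicit about why the invariants survive step~1, whereas the paper reaches the same conclusion through a short chain of algebraic substitutions and leaves the invariant check implicit.
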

\begin{proof}
  As all edges between $t_p$ and other terminals have been removed in
  step 1, the first assertion follows immediately from
  Lemma~\ref{lem:increment-isolating-cuts}.  The ``if'' direction of
  the second assertion is clear, and we now prove the ``only if''
  direction.

  Assume $t_j$ is the other affected terminal, then the size of
  min-iso-cuts for $t_j$ decreases by $w(t_p, v) - ext_j(v)$; thus
  $ext_j(v) = 1$, and $v$ and $t_j$ are adjacent.  Noting that
  $ext_p(v)$ is defined to be
    \begin{align*}
      & w(v, V - t_p) - w(v, t_p)\\
      =& w(v, t_j) + w(v, V - t_p - t_j) - w(v, t_p)\\
      =& w(v, V - t_j) - 1 + w(v, V - t_p - t_j) - w(v, t_p)\\
      =& w(v, V - t_p - t_j) - 1 + w(v, V - t_p - t_j),
    \end{align*}
    we can conclude $w(v, V - t_p - t_j) = 1$, and $w(t_p, v) = w(t_j,
    v) = \frac{ d(v) - 1}{2}$.
    
\end{proof}

\begin{theorem}
  Algorithm {\bf mcut} solves \textsc{multiterminal cut} in time
  $1.84^k \cdot n^{O(1)}$.
\end{theorem}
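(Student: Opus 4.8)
The plan is to prove correctness and the time bound separately, the latter via the branching‑vector method of \cite[Theorem~2.1]{fomin-11-exact-algorithms} applied to the measure $m=2k-h$, where $h=\sum_i d(t_i)$. For correctness I would induct on the recursion and check that every line of Figure~\ref{fig:alg} either answers the instance directly or replaces it by an equivalent instance (an equivalent collection, for branching lines). The direct answers --- line~0.0 for $p=2$, and the ``NO''/trivial‑cut exits of lines~0.3 and~2 --- are immediate; lines~0.1, 0.2, 0.4 are the safe reductions, by Theorem~\ref{thm:min-cut-safe}, by Lemma~\ref{lem:min+1} applied iteratively (recomputing extensions after each merge), and by the remark that a non‑terminal with at most two neighbours may be absorbed into its heaviest neighbour; and line~1 is exact since every edge between two terminals is crossing in every multiterminal cut. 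For the branching on the chosen $v\in N(t_p)$: after line~0.2 one has $X_p(v)=\{v\}$, and line~3 follows from a one‑line exchange --- when $N(v)\subseteq T$ and $ext_p(v)=1$, $t_p$ is a heaviest terminal neighbour of $v$, so an optimal cut may be modified to place $v$ with $t_p$ without increasing its size. Lines~4--7 branch on $v\notin V_p$ (line~4, case~1, which then pays the crossing edge $(t_p,v)$) versus $v\in V_p$, the latter refined into cases~2a, 2b, 2c by the applicable conditions; the extra merges on the $v\in V_p$ side are justified by submodular exchange (Theorem~\ref{thm:submodular}): if $v\in V_p$ while its unique non‑terminal neighbour $u\notin V_p$ (case~2a), or while both non‑terminal neighbours $u_1,u_2\notin V_p$ (case~2b), then $v$ can be moved out of $V_p$ without increasing the cut, which is already covered by case~1; hence case~2a need only handle $v,u\in V_p$, and the sub‑branches of case~2b need only handle $v,u_1\in V_p$ (case~2b.2) and $v\in V_p,\,u_1\notin V_p,\,u_2\in V_p$ (case~2b.1), while case~2c is the plain branch $v\in V_p$. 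Lemma~\ref{lem:X-connected} is invoked in case~1 so that the branch genuinely creates a crossing edge inside $X_p(v)$.

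For the running time I would first record that at every node of the recursion tree (right after that invocation's preprocessing) $0\le m\le k-1$: line~0.3 enforces $k<h\le 2k$, while no preprocessing line raises $m$ --- lines~0.1 and~0.2 leave every $d(t_i)$ unchanged, and any degree‑$\le 2$ non‑terminal adjacent to a terminal by an edge at least as heavy as its other incident edge has already been swallowed by the max‑vol min‑iso‑cut in line~0.1 (a minimality argument), so line~0.4 does not lower $h$ either. The branch vectors then rest on two facts. Fact (a): merging a non‑terminal into a terminal raises $h$ by exactly $ext(\cdot)$ for the first merge and by at least $1$ for each later merge of a vertex outside the then‑current max‑vol min‑iso‑cut; thus lines~3, 5, 6 (case~2b.2), 7 lower the measure by $1$, by at least $2$, by at least $2$, and by at least $ext_p(v)$, respectively. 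Fact (b): by Lemma~\ref{lem:case1}, deleting $(t_p,v)$ in case~1 lowers $h$ by at most $2w(t_p,v)-1$, and by at most $2w(t_p,v)-2$ unless $w(t_p,v)=1$ or a second terminal ties that weight, whereas $2k$ drops by exactly $2w(t_p,v)$; so the measure drops by at least $1$ in general and by at least $2$ in the exceptional case. Line~1 moves $2k$ and $h$ equally (and keeps $k\ge 0$ since $h\le 2k$), hence is measure‑neutral, as is the $p\to p-1$ recursion of line~2.

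Assembling: case~2a gives branch vector $(1,2)$ (case~1 drops $\ge 1$, the double merge of $v$ and $u$ drops $\ge 2$); case~2b gives $(1,2,3)$ (case~1 drops $\ge 1$; case~2b.2 drops $\ge 2$; case~2b.1 deletes the crossing edge $vu_1$, so $2k$ drops by $2$, and the ensuing merge of $v,u_2$ still raises $h$ by at least $1$ on balance, for a drop of $\ge 3$); and case~2c gives $(1,2)$, since if $ext_p(v)\ge 2$ merging $v$ alone drops the measure by $\ge 2$, while if $ext_p(v)=1$ one checks that ``not case~2a and not case~2b'' forces $w(t_p,v)\ge 2$ with no tying terminal --- here the degree‑$\ge 3$ property from line~0.4 is used --- so Lemma~\ref{lem:case1} makes case~1 drop the measure by $\ge 2$. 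The worst vector is $(1,2,3)$, whose characteristic value, the positive root of $z^3=z^2+z+1$, is below $1.8393$. Hence the search tree has at most $1.8393^{m}\le 1.8393^{k}\le 1.84^{k}$ leaves; since each node performs only polynomially many max‑flow computations (for the max‑vol min‑iso‑cuts, distances and extensions, via Theorem~\ref{thm:max-vol-min-cut}), the total running time is $1.84^{k}\cdot n^{O(1)}$.

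The step I expect to be the real obstacle is the exact bookkeeping of $h$ underlying the annotations in Figure~\ref{fig:alg}: verifying that in case~2b.1 the neighbour $u_2$ may indeed be folded into $t_p$ and that this merge, together with deleting $vu_1$, changes $h$ by a net $+1$; and verifying that the configuration ``$ext_p(v)=1$, neither case~2a nor case~2b'' cannot coincide with case~1 being only a unit move, which is precisely where Lemma~\ref{lem:case1} and the degree‑$\ge 3$ property exclude a $(1,1)$ branch. By comparison, the correctness exchange arguments and the solution of the recurrence are routine.
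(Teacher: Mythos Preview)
Your proposal is correct and mirrors the paper's own proof: the same measure $m=2k-h$, the same case analysis yielding the dominating branch vector $(1,2,3)$, and the same appeal to Lemma~\ref{lem:case1} (together with the degree-$\ge 3$ property from step~0.4) to rule out a $(1,1)$ branch when case~2c is reached with $ext_p(v)=1$. The one superfluous detail is your invocation of Lemma~\ref{lem:X-connected} for case~1 --- that lemma is used only inside the proof of Lemma~\ref{lem:min+1}, whereas case~1 simply deletes $(t_p,v)$, which is crossing whenever $v\notin V_p$ --- but this does not affect the argument.
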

\begin{proof}
  Let us first verify its correctness.  Step 0.0 ensures that the
  instance has at least three terminals; otherwise it solves the
  instance by finding a minimum ($t_1, t_2$)-cut.  The correctness of
  preprocessing steps~0.1 and 0.2 follows immediately from
  Theorem~\ref{thm:min-cut-safe} and Lemma~\ref{lem:min+1},
  respectively.  After that, on any given multiterminal cut $\{V_1,
  \dots, V_p\}$ for $T$, we have $d(V_i)\ge d(t_i)$ for each $1\le
  i\le p$; in other words, this cut has size $1/2\sum^p_{i=1} d(V_i)
  \ge 1/2\sum^p_{i=1} d(t_i)= h/2$.  Therefore, if $h > 2k$ we can
  safely report that there is no solution of size $k$ for the input
  instance.  On the other hand, the multiterminal cut $\{t_1, \dots,
  t_{p-1}, V - {T} + t_p \}$ has size no more than $h$, and hence can
  be returned as a solution if $h \leq k$.  This justifies step 0.3.
  Step 0.4 is straightforward.  A posteriori, after the preprocessing,
  the max-vol min-iso-cut for each terminal consists of itself; and
  any non-terminal vertex has at least three neighbors.

  Steps 1 and 2 are clear: every edge connecting two terminals is a
  crossing edge; an isolated terminal can be disregarded for further
  consideration.  Step 3 takes care of the case that $v$ has
  unit-distance to $t_p$ and is only adjacent to terminals.  Since we
  have disposed of non-terminal vertices of degree $1$ or $2$ in step
  0.4, and step 1 deletes only edges between terminals, there are at
  least two other terminals adjacent to $v$.  Noting $ext_p(v) = 1$
  and all weights are integral, it follows that $w(v,t_p)$ is the
  largest among all edges incident to $v$, and thus putting $v$ into
  $V_p$ will minimize crossing edges incident to $v$.  As $v$ is
  nonadjacent to any other non-terminals, its choice has no effect on
  other vertices.  This justifies step 3.

  Hereafter it takes the branching steps, and enters exclusively one
  case of them, which then calls the algorithm recursively.  For each
  neighbor $v$ of $t_p$, we have to either include $v$ to $V_p$ or
  count $(v,t_p)$ as a crossing edge.  Step 4 deals with the latter
  option, where we remove the edge and decrease the parameter
  accordingly.  The other option, where $v$ is put into $V_p$, is
  handled by steps 5-7.
  
  Step 5 handles the case where $v$ is balanced in two terminals $t_p$
  and $t_j$, then there must be another non-terminal vertex $u$ such
  that $w(v,u) = 1$.  As $w(v, t_p) = w(v, t_j)\ge 1 = w(v, u)$, we
  may assume $v$ is in either $V_p$ or $V_j$.  If $u$ is not in the
  same part as $v$, then it does not matter where to put $v$.  We may
  put it in $V_j$ and thus count $(v,t_p)$ as a crossing edge, which
  is already covered by case 1 (step 4).  Therefore, we may assume
  that $v$ is in $V_p$ only if $u$ is also in $V_p$.

  Step 6 handles the case where $v$ is adjacent to three neighbors
  each with a unit-weight edge.  If an optimal partition puts the
  three neighbors of $v$ into different parts, then $v$ must be in one
  of them and it does not matter which one.  This situation has
  already been covered in case 1.  Hence we assume either $N(v)$ is in
  the same part, which is covered by case 2b.2; or separated into two
  parts, which is covered by case 1, case 2b.1, and case 2b.2
  respectively.

  The correctness of step 7 is straightforward.  This completes the
  proof of the correctness of the algorithm.

  It is clear that every step takes polynomial time.  Hence we will
  concentrate on the branching steps, where it makes 2 or 3 recursive
  calls to itself.  In particular, it either executes case 1, or one
  of the cases 2a, 2b, and 2c.  To precisely determine the time
  complexity, we define $m = 2k - h$ as the measure, and consider its
  decrease in each branch.  By assumption that $t_p$ is the max-vol
  min-iso-cut for $t_p$, and $X_p(v) = v$ when $ext_p(v) = 1$, for
  every other non-terminal vertex $u$ (different from $v$), we always
  have
  \begin{equation}\label{eq:2}
    d(\{t_p, v, u\})\ge d(t_p) + 2.
  \end{equation}
  Also note that the measure will not increase as a side effect in
  other steps.

  In case 1, $k$ decreases by $w(t_p,v)$; by Lemma~\ref{lem:case1},
  $h$ decreases by at most $2w(t_p,v) - 1$.  In total, $m$ decreases
  by at least $1$.
  \begin{itemize}
  \item Case 2a.  According to \eqref{eq:2}, $h$ increases by at least
    $2$; no edge is deleted and $k$ is unchanged.  Thus, $m$ decreases
    by at least $2$.
  \item Case 2b is further separated into two sub-cases.  
    \begin{itemize}
    \item Case 2b.1.  According to \eqref{eq:2}, the min-iso-cut size
      for $t_p$ increases by 1.  By
      Lemma~\ref{lem:increment-isolating-cuts} and noting that the
      only deleted edge $u_1 v$ has unit-weight and $u_1$ is not a
      terminal, the min-iso-cut for other terminals are unchanged.
      Thus, $h$ increases by at least $1$.  As $k$ decreases by $1$,
      the measure decreases by at least $3$.
    \item Case 2b.2.  According to \eqref{eq:2}, $h$ increases by at
      least $2$ while $k$ remains unchanged.  Thus, $m$ decreases by at
      least $2$.
    \end{itemize}
  \item Case 2c.  The value of $h$ increases by $ext_p(v)$ while $k$
    remains unchanged; hence $m$ decreases by $ext_p(v)$.  According
    to Lemma~\ref{lem:case1}, $h$ decreases by $1$ in case 1 only if
    the condition of case 2b is satisfied.  In other words, if $m$
    decreases $1$ in case 2c, then it decreases by at least $2$ in
    case 1.
  \end{itemize}

  In summary, the worst case is when the algorithm branches into case
  1 and 2b.  In this situation, this algorithm makes three recursive
  calls to itself, with new measures $m-1$, $m-2$, and $m-3$,
  respectively.  Recall that we only start the branching when $h > k$,
  which means the initial value of the measure is $m = 2k - h < k$.
  We terminate the branching by returning ``NO'' as long as $m$
  decreases to a negative value, which means that the measure
  decreases by at most $k$, and therefore the number of recursive
  calls is upper bounded by $1.84^k$.  This finishes the analysis of
  time complexity.
\end{proof}

This theorem implies Theorem~\ref{thm:main}.  

Given a set of isolating cuts $(C_1,\dots,C_p)$, taking any $p - 1$ of
them while leaving all remained vertices in one part will make a
multiterminal cut.  If we omit the one with the largest size, then the
cut obtained as such will have size at most $(1-1/p)\sum^p_{i=1}
d(C_i)$.  This observation was first used to improve the approximation
ratio from $2$ to $2(1-1/p)$
\cite{dahlhaus-94-complexity-of-multiway-cuts}, and is also applicable
for our analysis.  The improvement is especially significant when $p$
is small.

\begin{proof}[Proof of Theorem~\ref{thm:p-terminal}]
  The exit condition $k \geq h$ in step 0.3 of algorithm \textbf{mcut}
  can be replaced by $k \geq (1-1/p) h$.  In other words, we only
  branch when $\frac{p}{p-1}k < h \leq 2k$, which means the measure $m
  = 2k - h$ can decrease by at most
  \[
  2k - \frac{p}{p-1}k = \frac{p-2}{p-1}k,
  \]
  from which and the branching vector
  ($1,2,3$) Theorem~\ref{thm:p-terminal} follows.  
  
\end{proof}

\small


\begin{thebibliography}{10}

\bibitem{bateni-12-ptas-planar-multiway-cut}
MohammadHossein Bateni, MohammadTaghi Hajiaghayi, Philip~N. Klein, and Claire
  Mathieu.
\newblock A polynomial-time approximation scheme for planar multiway cut.
\newblock In Yuval Rabani, editor, {\em SODA}, pages 639--655. SIAM, 2012.

\bibitem{thomasse-10-multicut}
Nicolas Bousquet, Jean Daligault, and St{\'e}phan Thomass{\'e}.
\newblock Multicut is {FPT}.
\newblock In Lance Fortnow and Salil~P. Vadhan, editors, {\em STOC}, pages
  459--468. ACM, 2011.

\bibitem{buchbinder-13-exponential-clocks-and-mwc}
Niv Buchbinder, Joseph Naor, and Roy Schwartz.
\newblock Simplex partitioning via exponential clocks and the multiway cut
  problem.
\newblock In Dan Boneh, Tim Roughgarden, and Joan Feigenbaum, editors, {\em
  Proceedings of the 45th Annual {ACM} Symposium on Theory of Computing
  (STOC)}, pages 535--544. ACM, 2013.

\bibitem{chekuri-11-submodular-multiway-partition}
Chandra Chekuri and Alina Ene.
\newblock Approximation algorithms for submodular multiway partition.
\newblock In Ostrovsky \cite{focs-2011}, pages 807--816.

\bibitem{chen-09-multiway-cut}
Jianer Chen, Yang Liu, and Songjian Lu.
\newblock An improved parameterized algorithm for the minimum node multiway cut
  problem.
\newblock {\em Algorithmica}, 55(1):1--13, 2009.

\bibitem{chen-08-dfvs}
Jianer Chen, Yang Liu, Songjian Lu, Barry O'Sullivan, and Igor Razgon.
\newblock A fixed-parameter algorithm for the directed feedback vertex set
  problem.
\newblock {\em Journal of the ACM}, 55(5):21:1--19, 2008.
\newblock A preliminary version appeared in STOC 2008.

\bibitem{cunningham-91-multiway-cut}
William~H. Cunningham.
\newblock The optimal multiterminal cut problem.
\newblock {\em DIMACS Series in Discrete Mathematics and Theoretical Computer
  Science}, 5:105--120, 1991.

\bibitem{cygan-11-multiway-cut}
Marek Cygan, Marcin Pilipczuk, Micha{\l} Pilipczuk, and Jakub~Onufry
  Wojtaszczyk.
\newblock On multiway cut parameterized above lower bounds.
\newblock {\em ACM Transactions on Computation Theory}, 5(1):3:1--3:11, 2013.

\bibitem{icalp-2012-1}
Artur Czumaj, Kurt Mehlhorn, Andrew~M. Pitts, and Roger Wattenhofer, editors.
\newblock {\em Automata, Languages, and Programming - 39th International
  Colloquium, ICALP 2012, Warwick, UK, July 9-13, 2012, Proceedings, Part I},
  volume 7391 of {\em LNCS}. Springer, 2012.

\bibitem{dahlhaus-94-complexity-of-multiway-cuts}
Elias Dahlhaus, David~S. Johnson, Christos~H. Papadimitriou, Paul~D. Seymour,
  and Mihalis Yannakakis.
\newblock The complexity of multiterminal cuts.
\newblock {\em SIAM Journal on Computing}, 23(4):864--894, 1994.
\newblock A preliminary version appeared in STOC 1992.

\bibitem{fomin-11-exact-algorithms}
Fedor~V. Fomin and Dieter Kratsch.
\newblock {\em Exact Exponential Algorithms}.
\newblock Texts in Theoretical Computer Science. Springer Berlin Heidelberg,
  2011.

\bibitem{ford-62-flow}
Lester~R. {Ford Jr}. and Delbert~R. Fulkerson.
\newblock {\em Flows in Networks}.
\newblock Princeton University Press, Princeton, New Jersey, 1962.

\bibitem{garg-04-weighted-multiway-cut}
Naveen Garg, Vijay~V. Vazirani, and Mihalis Yannakakis.
\newblock Multiway cuts in node weighted graphs.
\newblock {\em Journal of Algorithms}, 50(1):49--61, 2004.
\newblock A preliminary version appeared in ICALP 1994.

\bibitem{goldschmidt-94-k-cut}
Olivier Goldschmidt and Dorit~S. Hochbaum.
\newblock A polynomial algorithm for the $k$-cut problem for fixed $k$.
\newblock {\em Mathematics of Operations Research}, 19(1):24--37, 1994.
\newblock A preliminary version appeared in FOCS 1988.

\bibitem{thorup-11-k-cut}
Ken{-}ichi Kawarabayashi and Mikkel Thorup.
\newblock Minimum $k$-way cut of bounded size is fixed-parameter tractable.
\newblock In Ostrovsky \cite{focs-2011}, pages 160--169.

\bibitem{klein-12-planar-multiway-cut}
Philip~N. Klein and D{\'a}niel Marx.
\newblock Solving planar $k$-terminal cut in ${O}(n^{c \sqrt{k}})$ time.
\newblock In Czumaj et~al. \cite{icalp-2012-1}, pages 569--580.

\bibitem{marx-06-parameterized-graph-separation}
D{\'a}niel Marx.
\newblock Parameterized graph separation problems.
\newblock {\em Theoretical Computer Science}, 351(3):394--406, 2006.

\bibitem{marx-12-planar-multiway-cut}
D{\'a}niel Marx.
\newblock A tight lower bound for planar multiway cut with fixed number of
  terminals.
\newblock In Czumaj et~al. \cite{icalp-2012-1}, pages 677--688.

\bibitem{marx-14-multicut}
D{\'{a}}niel Marx and Igor Razgon.
\newblock Fixed-parameter tractability of multicut parameterized by the size of
  the cutset.
\newblock {\em SIAM Journal on Computing}, 43(2):355--388, 2014.
\newblock A preliminary version appeared in STOC 2011.

\bibitem{nagamochi-10-algorithmic-connectivity}
Hiroshi Nagamochi and Toshihide Ibaraki.
\newblock {\em Algorithmic Aspects of Graph Connectivity}.
\newblock Encyclopedia of Mathematics and its Applications. Cambridge
  University Press, Cambridge, United Kingdom, 2008.

\bibitem{focs-2011}
Rafail Ostrovsky, editor.
\newblock {\em IEEE 52nd Annual Symposium on Foundations of Computer Science,
  FOCS 2011, Palm Springs, CA, USA, October 22-25, 2011}. IEEE, 2011.

\bibitem{xiao-10-multiterminal-cut}
Mingyu Xiao.
\newblock Simple and improved parameterized algorithms for multiterminal cuts.
\newblock {\em Theory of Computing Systems}, 46(4):723--736, 2010.
\end{thebibliography}
\end{document}